\renewcommand\footnotetextcopyrightpermission[1]{}
\newtheorem{assumption}{Assumption}
\newtheorem{lemma}{Lemma}
\newtheorem{theorem}{Theorem}
\newcommand{\norm}[1]{\left\lVert#1\right\rVert}
\newenvironment{alg}[1][!t]
{
  \begin{algorithm}[#1]%
}{\end{algorithm}}
\algrenewcommand{\algorithmiccomment}[1]{\hfill\hfill\hfill\hfill\hfill $\triangleright$ #1}
\newcommand{\algrule}[1][.7pt]{\par\vskip.5\baselineskip\hrule height #1\par\vskip.5\baselineskip}
\definecolor{mygreen}{rgb}{0,0.6,0}
\definecolor{mygray}{rgb}{0.5,0.5,0.5}
\definecolor{mymauve}{rgb}{0.58,0,0.82}
\tiny\color{mygray}, 
\newcommand{\PHB}[1]{\noindent\textbf{#1}\hspace{.5em}} 
\newcommand{\PHM}[1]{\vspace{.2em}\noindent\textbf{#1}\hspace{.5em}} 
\begin{document}
\hypersetup{linkcolor=blue, citecolor=magenta}

\title{Pisces: Efficient Federated Learning \\via Guided Asynchronous Training}


\author{Zhifeng Jiang}
\affiliation{%
  \institution{HKUST}}
\email{zjiangaj@cse.ust.hk}

\author{Wei Wang}
\affiliation{%
  \institution{HKUST}}
\email{weiwa@cse.ust.hk}

\author{Baochun Li}
\affiliation{%
  \institution{University of Toronto}}
\email{bli@ece.toronto.edu}

\author{Bo Li}
\affiliation{%
  \institution{HKUST}}
\email{bli@cse.ust.hk}








\begin{abstract}
    Federated learning (FL) is typically performed in a synchronous parallel manner, where the involvement of a slow client delays a training iteration. Current FL systems employ a participant selection strategy to select fast clients with quality data in each iteration. However, this is not always possible in practice, and the selection strategy often has to navigate an unpleasant trade-off between the speed and the data quality of clients. 
    
    In this paper, we present Pisces, an \emph{asynchronous} FL system with intelligent participant selection and model aggregation for accelerated training. To avoid incurring excessive resource cost and stale training computation, Pisces uses a novel scoring mechanism to identify suitable clients to participate in a training iteration. It also adapts the pace of model aggregation to dynamically bound the progress gap between the selected clients and the server, with a provable convergence guarantee in a smooth non-convex setting.
    We have implemented Pisces in an open-source FL platform called Plato, and evaluated its performance in large-scale experiments with popular vision and language models. Pisces outperforms the state-of-the-art synchronous and asynchronous schemes, accelerating the time-to-accuracy by up to $2.0\times$ and $1.9\times$, respectively.
\end{abstract}

\maketitle

\pagestyle{plain}

\section{Introduction}
\label{sec:intro}

Federated learning~\cite{mcmahan2017communication} (FL) enables clients to
collaboratively train a model in a privacy-preserving manner under the
orchestration of a central server. At its core, FL keeps private data
decentralized, while only allowing clients to share local updates that
contain minimum information, such as the gradients or the model weights~\cite
{kairouz2019advances}. By evading the privacy risks of centralized learning,
FL has gained popularity in a multitude of applications, such as mobile
services~\cite{yang2018applied, hard2018federated, ramaswamy2019federated,
chen2019federated, google2021assistant, paulik2021federated}, financial
business~\cite{ludwig2020ibm, webank2020laundering}, and medical care \cite
{li2019privacy, nvidia2020oxygen}.

Current FL systems orchestrate the training process with a \textit
{synchronous parallel scheme}, where the server waits for all the
participating clients to finish local training before updating the global
model with the aggregated updates~\cite
{mcmahan2017communication, bonawitz2019towards}. While synchronous FL is easy
to implement, the \textit{time-to-accuracy}, measured by the wall clock time
needed to train a model to the target accuracy, can be substantially delayed
in the presence of \textit{stragglers} whose updates arrive much later than
others. The straggler problem is exacerbated when the clients have
order-of-magnitude difference in speed, which is commonplace in 
cross-device scenarios \cite{wu2019machine, yang2021characterizing,
lai2021oort} (\cref{sec:motivation_sync}).

A common approach to mitigating stragglers is to identify slow clients and exclude them from the participants in the current iteration. Existing works propose various metrics to quantify the clients' computing capabilities, based on which they reduce the involvement of slow clients~\cite[]{nishio2019client, chai2020tifl, lai2021oort}. Yet, participant selection may not always work well. Consider a pathological scenario where the clients' speeds and data quality are \textit{inversely correlated}~\cite{lai2021oort, huba2021papaya}. In this case, prioritizing fast clients inevitably overlooks those clients who are slow yet informative with quality data. As the time-to-accuracy depends on both participants' speeds and data quality, reconciling the demand for the two yields a \textit{knotty trade-off}. We show in \cref{sec:motivation_limit} that even Oort~\cite{lai2021oort}, the state-of-the-art participant selection strategy, can sometimes underperform random selection by 2.7$\times$ when navigating the trade-off.

In this paper, we advocate a more radical solution, that is, to switch to an \textit{asynchronous} FL design. In asynchronous FL, the server can both (1) select a subset of idle clients to run and (2) aggregate received local updates at any time, regardless of whether some clients are still in progress. As such, we can avoid blocking on straggling clients, thus relieving the pathological tension between clients' speeds and data quality. While the intuition is simple, there remain a few design challenges for efficient asynchronous FL.

First, being asynchronous results in more frequent client invocations, which may
harm the \textit{resource efficiency}. Existing approaches commonly
involve all clients and keep them running throughout the training
process~\cite{xie2020asynchronous, chen2020asynchronous, shi2020hysync}.
Compared with selecting only a small portion (e.g., 25\%) of clients, keeping
everyone busy yields marginal performance gains (e.g., $\leq$1.5$\times$) with
substantial resource overhead (e.g., 3.6-6.7$\times$). While this points to
controlled concurrency~\cite{nguyen2021federated}, it remains \textit
{unexplored} how to fully utilize the concurrency quota with optimized participant
selection for asynchronous FL.

Second, asynchrony encourages more intensive aggregation, which risks inducing \textit{stale computation}. When a client reports its local update, the global model has probably proceeded ahead of the initial version that the client bases on, to how many steps are termed \textit{staleness} of the client and affects the quality of the update. As incorporating local updates with larger staleness makes the global model converge slower both theorectically~\cite{nguyen2021federated} and empirically~\cite{xie2020asynchronous}, it is desirable to bound clients' staleness to a low level. While existing approaches such as buffered aggregation~\cite{nguyen2021federated} are somehow effective in lowering clients' staleness, they do not guarantee a bound in theory and rely on manual tuning efforts to adapt to different federation environments.

In this paper, we present Pisces, an end-to-end asynchronous parallel scheme for efficient FL training (\cref{sec:overview}). To make good use of each quota under controlled concurrency, Pisces conducts \textit{guided participant selection}. In a nutshell, Pisces prioritizes clients with high data quality, which is measured by clients' training loss based on the approximation of importance sampling~\cite{johnson2018training, katharopoulos2018not}. Given that the training loss may be misleading as a result of corrupted data or malicious clients, Pisces proposes to cluster clients' loss values to identify outliers for preclusion from training. Moreover, Pisces uses the predictability of clients' staleness to avoid inducing stale computation. Our design eliminates the unpleasant trade-off between clients' speeds and data quality, reaping the most resource efficiency for being asynchronous (\cref{sec:selection}).

To avoid being arbitrarily impacted by stale computation that is already induced, Pisces further adopts an \textit{adaptive aggregation pace control} with a novel online algorithm. By dynamically adjusting the aggregation interval to match the speeds of running clients, Pisces balances the aggregation load over time for both being steady in convergence and scalable in practice. The algorithm automatically adapts to different distributions of clients' speed without manual tuning (\cref{sec:aggregation}). It can also bound the clients' staleness under any target value, with a provable convergence guarantee in a smooth non-convex setting (\cref{sec:convergence}).

We have implemented Pisces atop Plato~\cite{tlsystem2021plato}, an open-source FL platform (\cref{sec:implementation}), under realistic settings of system and data heterogeneity across 100-400 clients (\cref{sec:evaluation}). Extensive experiments over image classification and language model applications show that, compared to Oort, the cutting-edge synchronous FL solution, and FedBuff \cite{nguyen2021federated}, the state-of-the-art asynchronous FL design, Pisces achieves the time-to-accuracy speedup by up to $2.0\times$ and $1.9\times$, respectively. Pisces is also shown to be insensitive to the choice of hyperparameters. Pisces will be open-sourced.

In summary, we make the following contributions in this work:

\begin{enumerate}
    \item We highlight the knotty trade-off between clients' speeds and data quality in synchronous FL.
    \item We propose algorithms to automate participant selection and model aggregation in asynchronous FL for dealing with resource efficiency and stale computation.
    \item We implement and evaluate these algorithms in Pisces and show the improvements over the state-of-the-art under realistic settings.
\end{enumerate}
\section{Background and Motivation}\label{sec:motivation}

In this section, we briefly introduce synchronous FL and its inefficiency in
the presence of stragglers (\cref{sec:motivation_sync}). We then discuss the
limitations of current participant selection strategies when navigating the trade-off of clients' speeds and data quality in synchronous FL, thus motivating the need for relaxing the synchronization barrier (\cref{sec:motivation_limit}).

\subsection{Synchronous Federated Learning}\label{sec:motivation_sync}

Federated learning (FL) has become an emerging approach for building a model from decentralized data. To orchestrate the training process, current FL systems employ the parameter server architecture \cite{smola2010architecture, li2014scaling} with a \textit{synchronous parallel} design \cite{mcmahan2017communication, bonawitz2019towards}, where a server refines the global model on a round basis, as illustrated in Fig.~\ref{fig:fl_sync}. In each round, the server randomly selects a subset of online clients to participate and sends the current global model to them. These clients then perform a certain number of training steps using their local datasets. Finally, the server waits for all participants to report their local updates and uses the aggregated updates to refine the global model before advancing to the next round.

\begin{figure}[t]
  \centering
  \begin{subfigure}[b]{0.49\columnwidth}
    \centering
    \includegraphics[width=\columnwidth]{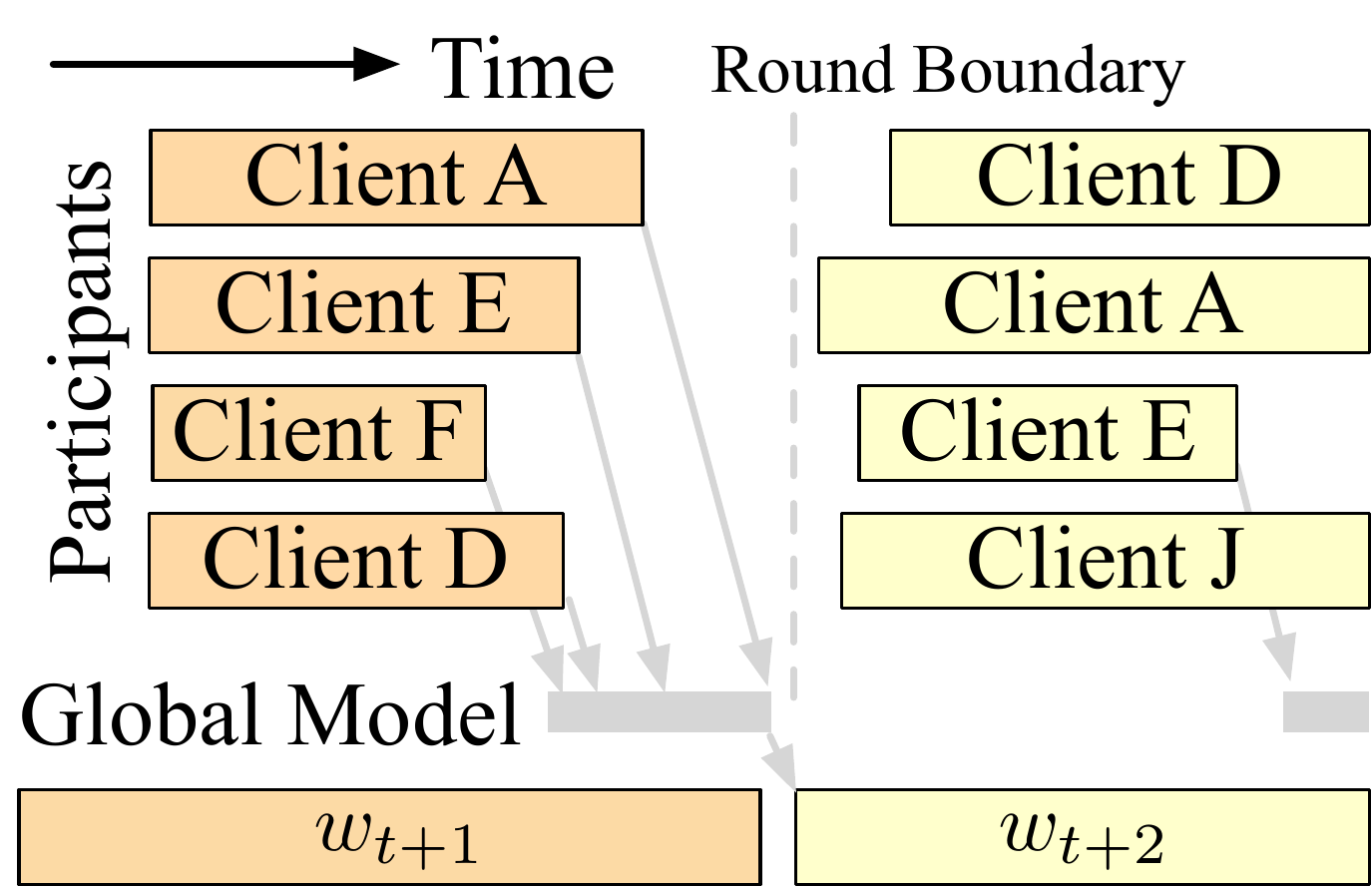}
    \caption{Synchronous}
    \label{fig:fl_sync}
\end{subfigure} \hfill
\begin{subfigure}[b]{0.49\columnwidth}
  \centering
  \includegraphics[width=\columnwidth]{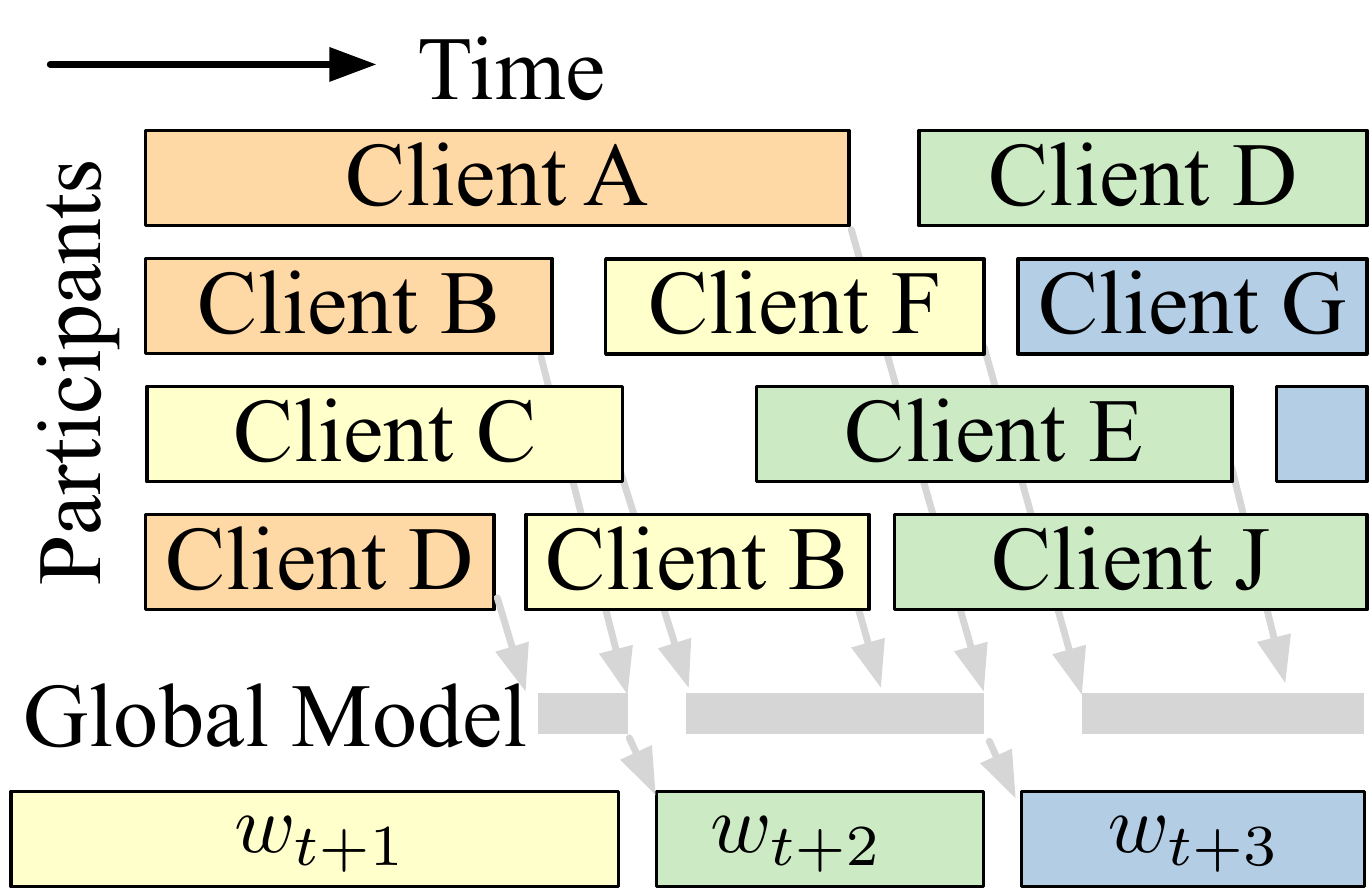}
  \caption{Asynchronous}
  \label{fig:fl_async}
\end{subfigure}
\caption{Sychronous and asynchronous FL \cite{huba2021papaya}.}
\label{fig:fl}
\end{figure}

\PHM{Performance bottlenecks.} The performance of synchronous FL can be significantly harmed by \textit{straggling clients} that process much slower than non-stragglers. This problem becomes particularly salient in cross-device scenarios, where the computing power and data amount vary across clients by orders of magnitude~\cite{caldas2018leaf, wu2019machine, yang2021characterizing, lai2021oort}. In our testbed experiments (detailed in Sec.~\ref{sec:evaluation_method}), the server running FedAvg~\cite{mcmahan2017communication} algorithm remains idle for 33.2-57.2\% of the training time waiting for the slowest clients to report updates.

To mitigate stragglers, simple solutions include periodic aggregation or over-selection \cite{bonawitz2019towards}. The former imposes a deadline for participants to report updates and ignores late submissions; the latter selects more participants than needed but only waits for the reports from a number of early arrivals. Both solutions waste the computing efforts of slow clients, leading to suboptimal resource efficiency. Developers then seek to improve over random selection by identifying and excluding stragglers in the first place.

\subsection{Participant Selection}\label{sec:motivation_limit}

By prioritizing fast clients, the average round latency will be shortened compared to random selection. However, this is not sufficient to achieve a shorter \textit{time-to-accuracy}, which is the product of average round latency and the number of rounds taken to reach the target accuracy. To avoid inflating the number of rounds when handling stragglers, participant selection should also account for the clients' data quality. As data quality may not be positively correlated with speeds, a good strategy must strike a balance in-between.

\PHM{Prior arts.} To navigate the trade-off between the two factors, extensive research efforts such as FedCS~\cite{nishio2019client}, TiFL~\cite{chai2020tifl}, Oort~\cite{lai2021oort} and AutoFL~\cite{kim2021autofl} have been made in the literature, among which Oort is the state-of-the-art for its fine-grained navigation and training-free nature. To guide participant selection, Oort characterizes each client $i$ with a continuous utility score $U^{Oort}_i$ that jointly considers the client's speed and data quality:

\begin{equation}
  U^{Oort}_i = \underbrace{|B_i| \sqrt{\frac{1}{|B_i|}\sum_{k \in B_i} Loss(k) ^2}}_{Data \  quality} \times \underbrace{(\frac{T}{t_i})^{\mathbbm{1}(T<t_i) \times \alpha}}_{System \ speed}.
  \label{eq:oort_utility}
\end{equation}

In a nutshell, the first component is the aggregate training loss that reflects the volume and distribution of the client's dataset $B_i$. The second component compares the client's completion time $t_i$ with the developer-specified duration $T$ and penalizes any delay (which makes the indicator $\mathbbm{1}(T<t_i)$ outputs $1$) with an exponent $\alpha > 0$.

\PHM{Inefficiency.} We briefly explain the \textit{strict penalty effect} that Eq.~\eqref{eq:oort_utility} imposes on slow clients. Following the evaluation setting in Oort, assume $\alpha=2$. The quantified data quality of a straggler $i$ will then be divided by a factor proportional to the square of its latency $t_i$. Given that in Oort a client is selected with probability in proportion to its utility score, such a penalty implies that this straggler has much less chance of being selected in the training than non-stragglers.

However, imposing such a heavy penalty is not always desirable. Consider a pathological case where the clients' speeds and data quality are inversely correlated, i.e., faster clients are coupled with fewer data of poorer quality. Note that this is not uncommon in practice~\cite{lai2021oort, huba2021papaya}; for example, it usually takes a client with a larger dataset a longer time to finish training. In this case, strictly penalizing slow clients can lead to using an insufficient amount or quality of data, which can impair the time-to-accuracy compared to no optimization. To illustrate this problem, we compare Oort over FedAvg~\cite{mcmahan2017communication} (that uses random selection) in a small-scale training task where 5 out of 20 clients are selected at each round to train over the MNIST dataset. In this emulation experiment (detailed settings in Sec.~\cref{sec:evaluation_method}), the completion time of clients are configured following the Zipf distribution ($a = 1.2$) \cite{lee2018pretzel, tian2021crystalperf, jia2021boki} so that the majority are fast while the rest are extremely slow. Accordingly, fast clients are associated with fewer data samples of more unbalanced label distribution and vice versa, leveraging latent Dirichlet allocation (LDA)~\cite{hsu2019measuring, reddi2020adaptive, al2020federated, acar2021federated}.

\begin{figure}[t]
  \centering
  \begin{subfigure}[b]{0.54\columnwidth}
    \centering
    \includegraphics[width=\columnwidth]{./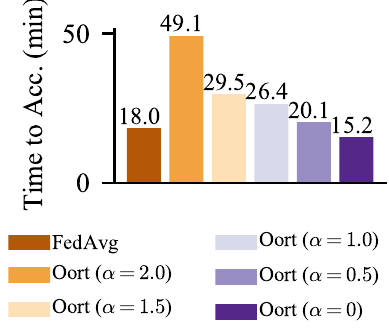}
    \caption{MNIST Model}
    \label{fig:inverse_mnist}
  \end{subfigure}
  \begin{subfigure}[b]{0.42\columnwidth}
    \centering
    \includegraphics[width=\columnwidth]{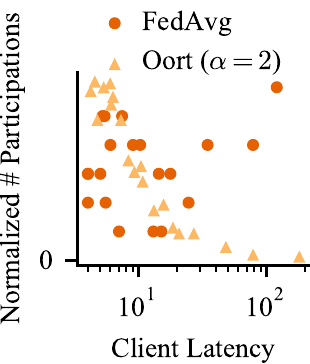}
    \caption{MNIST Clients}
    \label{fig:inverse_mnist_2}
  \end{subfigure}
  \caption{Slow clients are overlooked by Oort \cite{lai2021oort} when clients' speed and data quality are at odds.}
  \label{fig:inverse}
\end{figure}

Fig.~\ref{fig:inverse_mnist} depicts the time taken to reach 95\% accuracy. Oort with straggler penalty factor $\alpha=2.0$ suffers $2.7\times$ slowdown than FedAvg. According to Fig.~\ref{fig:inverse_mnist_2}, Oort's poor performance results from its bias toward fast clients. Under strict penalty, stragglers in Oort get way less attention than others, despite having rich data of good quality. Instead, FedAvg evenly picks each client, which involves stragglers for enough times and thus leads to faster convergence.

\PHM{Generality.} To confirm that the strict penalty effect generally exists, we evaluate Oort across different small $\alpha$'s: 2.0, 1.5, 1.0, 0.5, and 0. As Fig.~\ref{fig:inverse_mnist} shows, while using a more gentle straggler penalty factor (i.e., smaller $\alpha$) does yield a shorter time-to-accuracy, the use of nonzero factors is still harmful. For Oort to improve over FedAvg, it should ignore the speed disparity and purely focus on prioritizing clients with high data quality (i.e., $\alpha = 0$). This strategy, however, deviates from Oort's original design and mandates manual tuning with prior knowledge. This limitation also generally applies to other optimization alternatives due to the tricky trade-off between clients' speeds and data quality.

In short, participant selection does not completely address the performance bottlenecks in synchronous FL. The limited tolerance for stragglers is responsible for such inefficiency.
\section{Pisces Overview}\label{sec:overview}

Pisces improves training efficiency by performing asynchronous FL with novel algorithmic principles. In this section, we present the configuration space and system designs to help the reader follow the principles that are later introduced.

\PHB{Design knobs.} The advantages of switching to an asynchronous design are two-fold. \textit{First}, it can inform an available client to train whenever it is idle, as illustrated in Fig.~\ref{fig:fl_async}. As such, fast clients do not have to idle waiting for stragglers to finish their tasks. \textit{Second}, the server can conduct model aggregation as soon as a local update becomes available. Thus, the pace at which the global model evolves can be lifted by fast clients, instead of being constrained by stragglers.

Putting it altogether, Pisces aims to unleash the performance potential of FL training by answering the following two sets of questions:

\begin{enumerate}
  \item How many clients are selected to run concurrently? If there is an available quota in an iteration, how to decide whether to launch local training, and which clients to select (\cref{sec:selection})?
  \item In each iteration, determine whether the received model updates need aggregation and how should they be aggregated (\cref{sec:aggregation}).
\end{enumerate}

\begin{figure}[t]
    \centering
    \includegraphics[width=0.80\columnwidth]{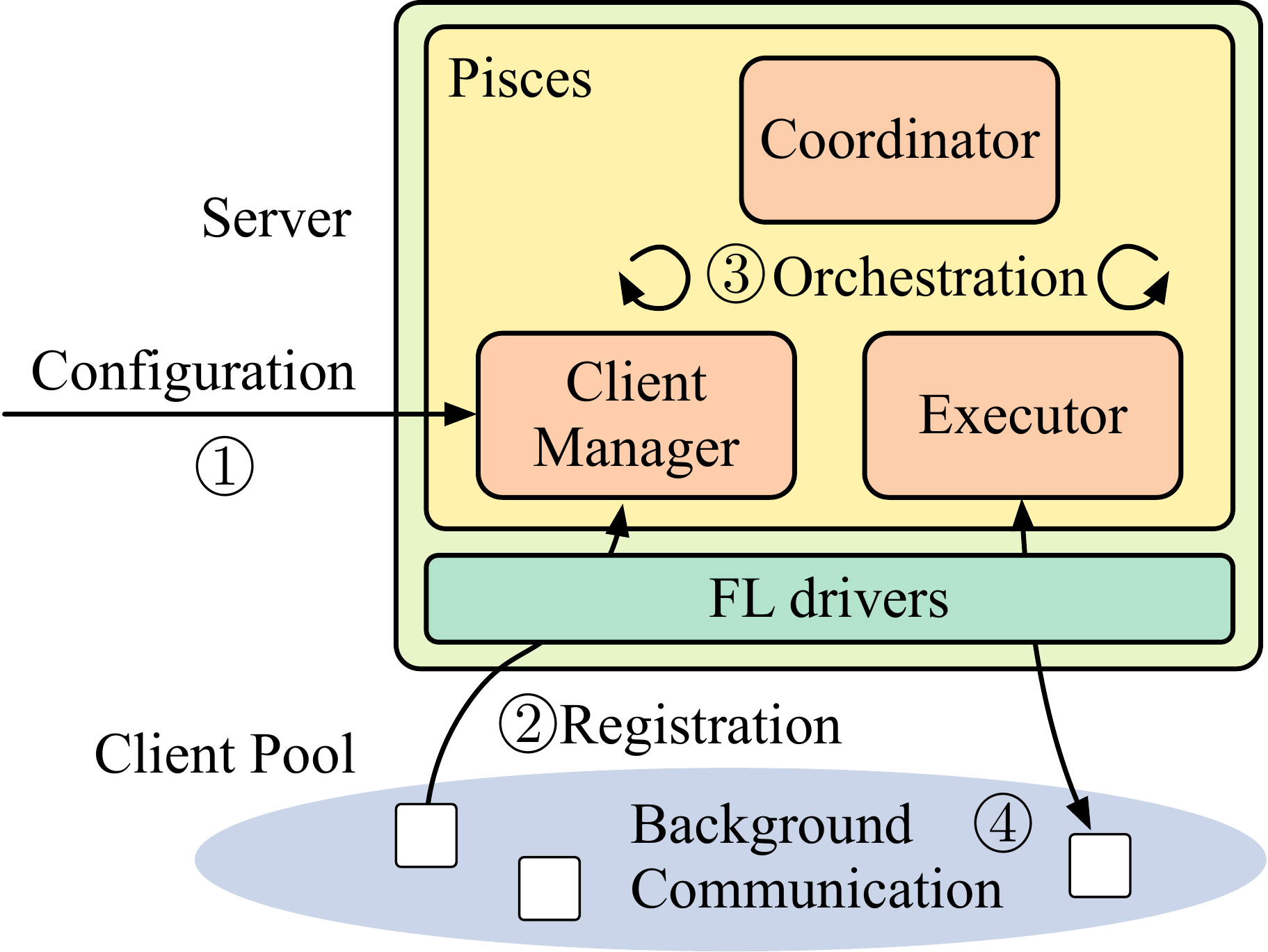}
    \caption{Pisces' architecture.}
    \label{fig:Pisces}
\end{figure}

\PHM{Architecture.} Given the client selection and model aggregation strategies, Pisces implements them in the training workflow by orchestrating three components: client manager, coordinator, and executor. Fig.~\ref{fig:Pisces} depicts an architecture overview. \textcircled{\raisebox{-1.0pt}{1}} \textit{Configuration}: given a training plan, the client manager configures itself and waits for clients to arrive. \textcircled{\raisebox{-1.0pt}{2}} \textit{Registration}: upon a client's arrival, the client manager registers its meta-information (e.g., dataset size) and starts to monitor its runtime metrics (e.g., response latency). \textcircled{\raisebox{-1.0pt}{3}} \textit{Orchestration}: during training, the coordinator iterates over a control loop that interacts with both the client manager and executor. \textcircled{\raisebox{-1.0pt}{4}} \textit{Background Communication}: in the background, the executor handles the server-client communication, maintains a buffer that stores non-aggregated local updates, and validates the model with hold-out datasets.

\begin{figure}[t]
    \begin{lstlisting}[belowskip=-2.0 \baselineskip, escapechar=|]
  import executor, client_manager
  
  def asynchronous_training():
    # Repeat every time window
    while True:
      # Perform model aggregation if necessary
      if client_manager.need_to_aggregate():|\label{line:loop_agg}|
        executor.aggregate()
          
      # Exit the loop if applicable
      if executor.to_terminate():
        break|\label{line:loop_exit}|
      
      # Select participants if necessary
      if client_manager.need_to_select():|\label{line:loop_slt}|
        clients = client_manager.select_clients()
        executor.start_local_training(clients)\end{lstlisting}
    \caption{Pseudocode of the coordinator's control loop.}
    \label{fig:loop}
\end{figure}

We further zoom in on the coordinator's control loop, as shown in Fig.~\ref{fig:loop}. At each iteration, the coordinator first asks the client manager whether to perform model aggregation (Line~\ref{line:loop_agg}). If necessary, it delegates the task to the executor for completion. The coordinator then consults the client manager on whether any idle client needs to be selected (Line~\ref{line:loop_slt}). The client manager will either reply no, or yes with a plan to instruct the executor on whom to select.

In the next two sections, we focus on Pisces' algorithmic principles designed to tackle the concerns of resource efficiency and stale computation arising from asynchrony.
\section{Participant Selection}\label{sec:selection}

In this section, we consolidate the need for controlling the maximum number of clients allowed to run concurrently (i.e., concurrency) (\cref{sec:selection_concurrency}), and then introduce how Pisces selects clients for fully utilizing the concurrency quotas (\cref{sec:selection_utility}).

\subsection{Controlled Concurrency}\label{sec:selection_concurrency}

Idle clients can be invoked at any time in asynchronous FL. Having all clients training concurrently, however, can \textit{saturate} the server's memory space and network bandwidth. Even with abundant resources, it remains a question whether an excessive resource usage can translate into a \textit{proportional} performance gain. Many existing works continuously invoke all clients for maximizing the speedup~\cite{xie2020asynchronous, chen2020asynchronous, shi2020hysync}. There exists a work, i.e., FedBuff~\cite{nguyen2021federated}, that evaluates various concurrencies; however, it neither points out the scaling issue of concurrency in asynchronous FL nor does it consider how to fully utilize the concurrency quotas. How to maximize the \textit{resource efficiency} is thus a challenge in designing the principles of participant selection.

\PHB{Is selection still necessary?} To empirically examine the trade-off between resource cost and run-time gain, we characterize the scaling behavior of FedBuff using the same 20-client testbed mentioned in Sec.~\ref{sec:motivation_limit}. We regard it as a representative because it is the state-of-the-art asynchronous FL approach deployed in Facebook AI \cite{huba2021papaya}. Essentially, it selects clients randomly and employs buffered aggregation where model aggregation is conducted after the number of received local updates exceeds a certain aggregation goal (more details in Sec.~\ref{sec:aggregation_staleness}). Here we consistently set the aggregation goal to be 40\% of the concurrency limit.

\begin{figure}[t]
    \centering
  \begin{subfigure}[b]{0.42\columnwidth}
    \centering
    \includegraphics[width=\columnwidth]{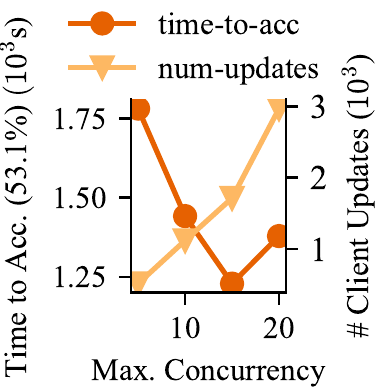}
    \caption{FEMNIST (Image)}
    \label{fig:async_scale_femnist}
  \end{subfigure}
  \hspace{15px}
  \begin{subfigure}[b]{0.45\columnwidth}
    \centering
    \includegraphics[width=\columnwidth]{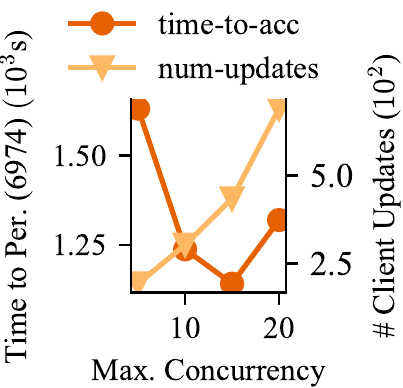}
    \caption{StackOverflow (LM)}
    \label{fig:async_scale_stackoverflow}
  \end{subfigure}
  \caption{Scaling up the concurrency leads to diminishing gain in run time with 
  escalating resource usage.}
  \label{fig:async_scale}
  \vspace{-0.3cm}
\end{figure}

\definecolor{my-red}{RGB}{230,97,1}
\definecolor{my-orange}{RGB}{253,184,99}

In Fig.~\ref{fig:async_scale}, we can see a reduction in time-to-accuracy as the maximum allowed concurrency grows (\textcolor{my-red}{dark lines}), though at a \textit{diminishing} speed. After a \textit{turning} point (around 15 clients), the time-to-accuracy instead starts to inflate. On the other hand, we constantly observe a \textit{superlinear} increase in the accumulated number of clients' updates when scaling up the maximum concurrency (\textcolor{my-orange}{light lines}). Such a growing pattern indicates an escalating burden on network bandwidth and limitations in scalability. Hence, unbounded concurrency impairs resource efficiency, and it is necessary to limit the concurrency to a small portion of the population.

\subsection{Utility Profiling}\label{sec:selection_utility}

With limited concurrency quotas, the challenge of maximizing resource efficiency boils down to how to select participants given each available quota.

\PHM{Does random selection suffice?} As stragglers can be tolerated, it naturally raises the question of whether random selection suffices for asynchronous FL before resorting to more complex methods. Our answer is no, as clients widely exhibit heterogeneous data distributions~\cite{hsieh2020non}. Even if clients have the same speed, it is still desirable to focus on clients with more quality data for taking a larger step towards convergence each time an update is incorporated. We empirically show the inferiority of random selection in Sec.~\ref{sec:evaluation_end2end}.

\PHM{Which clients to prioritize?} We have left off the question of how to identify clients with the most \textit{utility} to improve model quality. Starting from the state-of-the-art solution established for synchronous FL (\cref{sec:motivation_limit}), we need to address the following concerns arising from asynchronous FL:

\begin{itemize}
    \item Given lifted tolerance for stragglers, do they still have to be strictly penalized in the chance of being selected?
    \item Given new training dynamics, does clients' data quality influence the global model in the same way?
\end{itemize}

Both of the answers are no. First, an individual client's speed \textit{does not determine the pace at which the global model updates}. With the removal of synchronization barriers, there is no need to wait for stragglers to finish throughout the training process. On the other hand, strictly penalizing stragglers risks precluding informative clients given the coupled nature of speeds and data quality (~\cref{sec:motivation_limit}). Thus, getting rid of slow clients can do more harm than good.

On the other hand, a client's speed \textit{indirectly affects its utility to improve the global model}. The longer it takes a client to train, the more changes the global model that it bases on is likely to undergo, because of other clients' contributions in the interim. In the literature, it is dubbed as \textit{staleness} the lag between the version of the current global model and that of the locally used one. Empirical studies show that as the staleness of an update grows, the accuracy gain from incorporating that update will diminish \cite{ho2013more, cui2014exploiting, xie2020asynchronous}. Hence, it is not so useful to select clients with high-quality data but a large chance to produce stale updates.

Given the two insights, we formulate the  utility of a client by respecting the roles that its data quality and staleness play in improving the global model:

\begin{equation}
    U^{Pisces}_i = \underbrace{|B_i| \sqrt{\frac{1}{|B_i|}\sum_{k \in B_i} Loss(k) ^2}}_{Data \  quality} \times \underbrace{\frac{1}{(\tilde{\tau_i} + 1)^\beta}}_{Staleness},
    \label{eq:Pisces_utility}
\end{equation} where $Loss(k)$ is the training loss of sample $k$ from the client $i$'s local dataset $B_i$, $\tilde{\tau_i} \geq 0$ is the estimated staleness of the client's updates and $\beta > 0$ is the staleness penalty factor. Based on the profiled utilities, Pisces sorts clients and selects the clients with the highest utilities to train.

\PHM{Robustness against training loss outliers.} The first term of Eq.~\ref{eq:Pisces_utility} originates from Oort's utility formula (Eq.\ref{eq:oort_utility}) that approximates the ML principle of \textit{importance sampling} \cite{johnson2018training, katharopoulos2018not} to sketch a client's data quality. We do not reinvent the formulation as its effectiveness does not vary across synchronization modes. Moreover, it has the advantages of no computation overhead and negligible privacy leakage.

Beyond the formulation, however, we need to tackle a \textit{robustness challenge} unique to asynchronous FL. By definition, clients with high training loss are taken as possessing important data. In practice, a high loss could also result from the client's \textit{corrupted data} or \textit{malicious manipulation}. To quickly fix, Oort decreases the chance of the latter case by (i) adding randomness by probabilistic sampling and (ii) blacklisting clients who are selected over a threshold of times.

Unfortunately, both strategies do not generalize to asynchronous FL. \textit{First}, the server performs selection way more frequently,  eliminating the benefits of probabilistic sampling. Suppose that asynchronous FL selects clients every $5$ seconds and synchronous FL every $60$ second. For a client with a probability of 0.01 to be selected in an attempt, the probability that it must be selected within five minutes is $1 - (1-0.01)^5 \approx 5\%$ in synchronous FL, while being $1 - (1-0.01)^{60} \approx 45\%$ in asynchronous FL. \textit{Further}, as a client is generally involved more frequently, its participation can quickly reach the blacklist threshold. The client pool for selection can thus be exhausted in the late stage of training, hindering convergence.

Given the intuition that (i) the loss values of benign clients evolve in \textit{roughly the same direction}, while (ii) those resulting from corrupted or malicious clients tend to be outliers consistently, we propose to blacklist clients whose losses have been \textit{outliers over a threshold of time}. Initially, each client is given $r$ \textit{reliability credits}. For a client update which uses the global model of version $w_t$, Pisces pools the received client updates that are trained from similar initial versions of the global model, namely $\{w_{t-k}, w_{t-k+1}, \dots, w_{t}\}$ for some $k > 0$, and runs DBSCAN~\cite{ester1996density} to \textit{cluster} their associated losses. Each time a client's loss value is identified as an outlier, its credit gets deducted by one. A client will be removed from the client pool when it runs out of the reliability credits. As validated, Pisces can reduce anomalous updates while avoiding blindly blacklisting benign clients (\cref{sec:evaluation_sensitivity}).

\PHM{Staleness-aware discounting.} As the second term in Eq.~\ref{eq:Pisces_utility} shows, Pisces \textit{discounts} a client's utility based on its estimated staleness of its local updates. We adopt the reciprocal of a polynomial function for realizing two goals. (1) \textit{Functionality}: given the same data quality, the client with larger staleness should get a larger discount in utility for being selected less likely. (2) \textit{Numerical stability}: the speed at which the discount inflates should decrease as the staleness goes infinite.

\begin{figure}[t]
    \centering
  \begin{subfigure}[b]{0.40\columnwidth}
    \centering
    \includegraphics[width=\columnwidth]{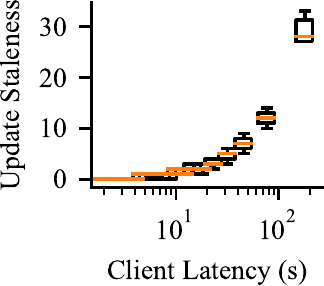}
    \caption{FEMNIST (Image)}
    \label{fig:async_staleness_femnist}
  \end{subfigure}
  \hspace{15px}
  \begin{subfigure}[b]{0.40\columnwidth}
    \centering
    \includegraphics[width=\columnwidth]{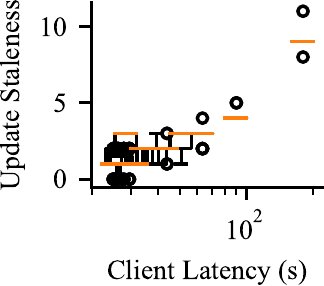}
    \caption{StackOverflow (LM)}
    \label{fig:async_staleness_stackoverflow}
  \end{subfigure}
  \caption[]{Clients' staleness varies slightly throughout the training regardless of their execution time.}
  \label{fig:async_staleness}
\end{figure}

It remains a question of how to estimate the actual staleness $\tau_i$ in implementation. By definition, its exact value is unknown until the client returns its update, which happens after the selection. While it may be precisely inferred through elaborate simulation of the federation, doing so is impractical due to the need for accurate knowledge of clients' speeds and abundant computing resources at the server. Given that Eq.~\ref{eq:Pisces_utility} is not designed to work with an accurate estimation on $\tau_i$, we advocate \textit{approximating it with historical records}. Specifically, Pisces lets $\tilde{\tau_i}$ be the moving average of the most recent $k$ actual values $\tau_{i, t-k+1}, \tau_{i, t-k+2}, \dots \tau_{i,t}$, i.e.,

\begin{equation}
    \tilde{\tau_i} = \frac{1}{k}\sum_{j=t-k+1}^{t} \tau_{i, j}.
    \label{eq:staleness_estimation}
\end{equation}

The intuition behind the approximation is that the staleness of a client's updates is usually \textit{stable} over time, given the stability of (1) clients' execution times and (2) the frequency of model aggregation. To exemplify, we study the staleness behaviors associated with the experiments mentioned in Sec.~\ref{sec:selection_concurrency}. We use 15 as the concurrency limit without loss of generality. As depicted in Fig.~\ref{fig:async_staleness}, during the training, the staleness of each client slightly fluctuates around the median, with the maximum range of individual values being $6$ and $3$ for FEMNIST and StackOverflow, respectively.
\section{Model Aggregation}\label{sec:aggregation}

While Pisces' optimizations in participant selection can mitigate stale computation, it cannot thoroughly prevent its occurrences. As for stale updates already generated, we operate at the phase of model aggregation to protect the global model from being arbitrarily impaired. In this section, we first start with the goal of bounded staleness and discuss the limitations of existing fixes (\cref{sec:aggregation_staleness}). We then elaborate on Pisces' principles on performing adaptive aggregation pace control for efficiently bounding clients' staleness (\cref{sec:aggregation_adaptive}).

\subsection{Bounded Staleness}\label{sec:aggregation_staleness}

As mentioned in Sec.~\ref{sec:selection_utility}, local updates with larger staleness empirically bring less gain in model convergence. This fact has a theoretical ground as stated below, consolidating the first-order goal of bounding staleness in aggregation.

\PHM{Why bounded staleness is desired?} The mainstream way to derive a convergence guarantee for asynchronous FL is based on the perturbed iterate framework~\cite{mania2017perturbed, nguyen2021federated}. In addition to assumptions that are commonly made in synchronous FL, it requires another assumption to hold true as follows:

\begin{assumption}
(Bounded Staleness) For all clients $i \in [N]$ and for each Pisces' server step, the staleness $\tau_i$ between the model version in which client $i$ uses to start local training, and the model version in which the corresponding update is used to modify the global model is not larger than $\tau_{\max}$. 
\label{as:staleness}
\end{assumption}

We here sketch the intuition on how bounded staleness helps convergence. By limiting the staleness of each client's updates all the time, we can \textit{limit the model divergence} between any version of the global model $w_t$ and the initial model that any corresponding contributor of $w_t$ uses. This implies that the contributors' gradients do not deviate much from each other. Consequently, each model aggregation attempt does take an effective step towards the training objective, and the training can thus terminate with finite times of aggregation. We provide more details in Sec.~\ref{sec:convergence}.

\begin{figure}[t]
  \centering
\begin{subfigure}[b]{0.98\columnwidth}
  \centering
  \includegraphics[width=\columnwidth]{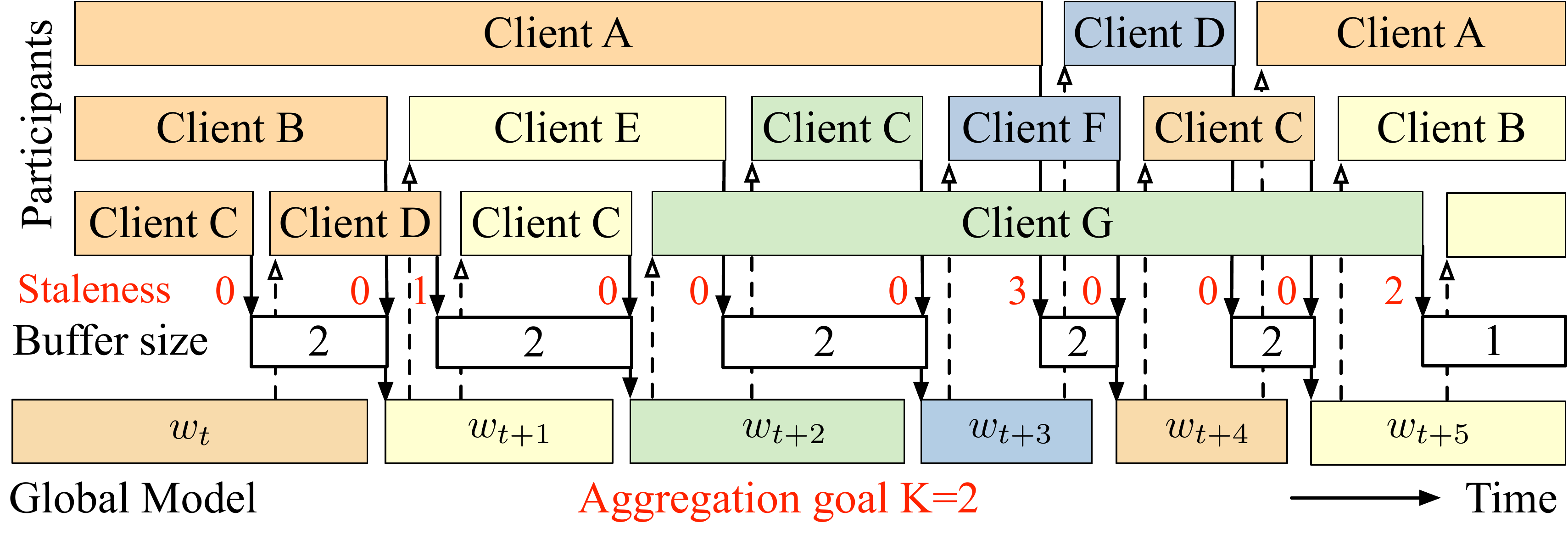}
  \caption{Buffered aggregation in FedBuff~\cite{nguyen2021federated}.}
  \label{fig:async_aggregation_ba}
\end{subfigure}
\begin{subfigure}[b]{0.98\columnwidth}
  \centering
  \includegraphics[width=\columnwidth]{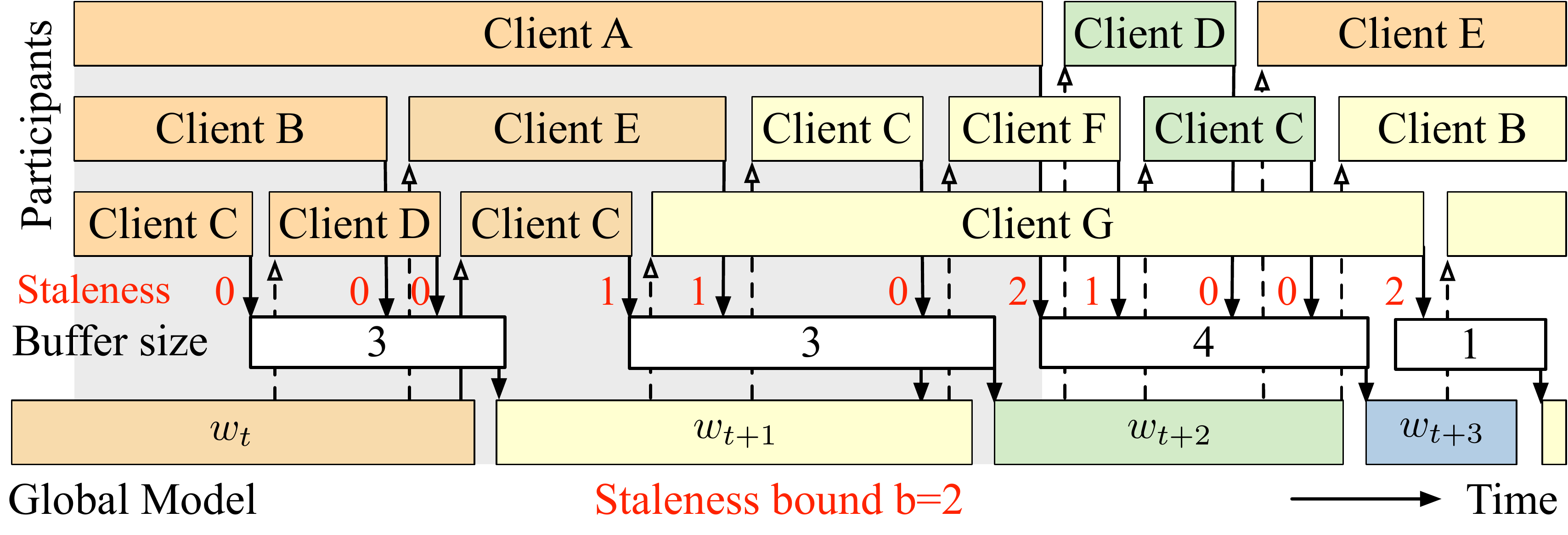}
  \caption{Pisces' adaptive pace control.}
  \label{fig:async_aggregation_Pisces}
\end{subfigure}
\caption[]{Two methods for steering aggregation pace.}
\label{fig:async_aggregation}
\end{figure}

\PHM{Does buffered aggregation suffice?} To control clients' staleness, the state-of-the-art design adopts \textit{buffered aggregation (BA)}. The server uses a buffer to store received local updates and only aggregates when the buffer size reaches a predefined aggregation goal $K > 1$, as illustrated in Fig.~\ref{fig:async_aggregation_ba}.

Nevertheless, due to the lack of explicit control, the maximum staleness across clients in BA can \textit{go unbounded}, to which extent depends on the heterogeneity degree of clients' speeds. For example, for the experiments reported by Fig.~\ref{fig:async_staleness}, the fastest client is $90\times$ faster than the slowest in FEMNIST and $7.2\times$ in StackOverflow. In this case, despite using the same aggregation goal ($K=6$), the maximum staleness values differ vastly (33 in FEMNIST and 11 in StackOverflow). This implies the need for \textit{manually tuning} the aggregation goal across different federation environments and learning tasks, which can be too expensive or even prohibited.

\subsection{Adaptive Pace Control}\label{sec:aggregation_adaptive}

To generally enforce bounded staleness, we develop an adaptive strategy for steering the pace of model aggregation.

\begin{alg}[t]
  \small
  \algrule
  \DontPrintSemicolon
  \SetNoFillComment
  \KwIn{Running clients $R$, target staleness bound $b > 0$, last aggregation time $t_{\tau_{j-1}}$, current time $T_l$, clients' profiled latencies $\{L_i\}_{i \in [n]}$}
  \KwOut{Aggregation decision for the loop step $l$}
  \algrule
  \SetKwInOut{Para}{Para.}
  \SetKwProg{Fn}{Function}{}{}
  \newcommand\mycommfont[1]{\small\rmfamily\textcolor{ACMPurple}{#1}}
  \SetCommentSty{mycommfont}
  \SetAlgoNoLine

  \SetKwFunction{ManagerIfAggregate}{ManagerIfAggregate}

  \Fn{\ManagerIfAggregate{}}{
      \tcc{Set the aggregation interval proportionate to the profiled latency of the slowest running client.}
      $L_{max} = \max_{i \in R} L_i$ \;
      $I = L_{max} / b$ \;
      \BlankLine

      \tcc{Aggregate if the interval currently ends.}
      \Return{$T_l - t_{\tau_{j-1}} > I$}\;
  }
  \algrule
  \caption{Adaptive aggregation pace control in Pisces.}
  \label{algo:agg}
\end{alg}

\PHM{How to be adaptive to complex dynamics?} In general, the distribution of the staleness values across clients is determined by the interplay between (1) the algorithms used (selection and aggregation), and (2) the dynamic environment (e.g., clients' speeds or data quality). While accounting for the whole population is overwhelming, we can reduce the problem to a narrowly scoped one that only considers an individual client. Our insight is that to enforce a staleness bound for all clients throughout the training, it suffices to bound the staleness of the slowest running client for each time unit. We thus propose to keep track of the running client with the \textit{largest profiled latency} and adjust the \textit{aggregation interval}, i.e., the time between two consecutive model aggregations, such that the aggregation pace can somehow match the slowest client's speed.

\PHM{How to adjust the interval?} We further develop the idea with a toy example as shown in Fig.~\ref{fig:async_aggregation_Pisces}. We focus on the time period where Client $A$ performs local training (highlighted with grey shadow). During this process, $A$ is consistently the slowest running client. Thus, by ensuring that $m \leq 2$ model updates take place during $A$'s training, we can guarantee the same thing for any other running clients within this period.

It only leaves off the question of when should these $m$ aggregations happen. Our intuition is that, arranging them evenly in terms of time can balance the numbers of contributors across different aggregation attempts in expectation, helping the global model \textit{evolve smoothly} in theory. Also, a (nearly) uniform distribution of the server's aggregation workload can sustain \textit{better scalability}.

\PHM{Latency-aware aggregation interval.} By generalizing the above idea to real deployments where the identity of the slowest running client changes over time, we obtain Pisces' principles on steering the model aggregation pace. In essence, Pisces periodically examines the necessity of aggregation in the control loop (\cref{sec:overview}). As outlined in Alg.~\ref{algo:agg}, for a loop step $l$ that begins at time $T_l$, Pisces first fetches the profiled latency $L_{max}$ of the slowest client that is currently running. It then determines the aggregation interval $I$ suitable to bound the client's staleness based on the above intuition. Finally, it computes the elapsed time $e = T_l - t_{\tau_{j-1}}$ since the last model aggregation happened. Only when $e$ is larger than the interval $I$ will Pisces suggest performing model aggregation in this step. In practice, clients' latencies can be profiled with historical records. As Sec.~\ref{sec:convergence} shows, assuming accurate latency predictions, Alg.~\ref{algo:agg} can realize bounded staleness.
\section{Convergence Analysis}\label{sec:convergence}

In this section, we first prove that Pisces guarantees bounded staleness. We then present Pisces' convergence guarantee.

\PHM{Notation.} We use the following notation: $T_l$ denotes the start time of a loop step $l$, $I_l$ denotes the aggregation interval generated by Alg.~\ref{algo:agg} at step $l$, $n$ denotes the total number of clients, $b$ denotes the target staleness bound used by Pisces, $\nabla F_i(w)$ denotes the gradient of model $w$ with respect to the loss of client $i$'s data, $f(w) = \frac{1}{n} \sum_{i=1}^n p_i F_i(w)$ denotes the global learning objective with $p_i > 0$ weighting each client's contribution, $f^*$ denotes the optimum of $f(w)$, $g_i(w; \xi_i)$ denotes the stochastic gradient on client $i$ with randomness $\xi_i$, and $Q$ denotes the number of steps in local SGD.

\PHM{Why does Pisces achieve bounded staleness?} We first state a useful lemma that globally holds for each loop step.

\begin{lemma}
  For any loop step $l$ where model aggregation happens, there must be no model aggregation happening during the time range $[T_l - I_l, T_l)$.
  \label{lem:interval}
\end{lemma}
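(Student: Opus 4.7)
The plan is to derive the lemma directly from the aggregation trigger in Alg.~\ref{algo:agg}, since the lemma is essentially a restatement of that condition combined with the definition of $t_{\tau_{j-1}}$. First, I would unpack the algorithm: at any loop step $l$, Pisces decides to aggregate if and only if $T_l - t_{\tau_{j-1}} > I_l$, where $t_{\tau_{j-1}}$ is the time at which the most recent prior aggregation occurred and $I_l = L_{\max}/b$ is the interval computed at step $l$.

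Next, assume that model aggregation does happen at step $l$. Applying the trigger condition gives $t_{\tau_{j-1}} < T_l - I_l$. I would then argue that $t_{\tau_{j-1}}$ is, by construction, the time of the immediately preceding aggregation event. Hence no aggregation event can lie in the open interval $(t_{\tau_{j-1}}, T_l)$, because any such event would have reset the ``last aggregation time'' used by step $l$. Combining this with $t_{\tau_{j-1}} < T_l - I_l$ yields the stronger statement that no aggregation occurs anywhere in $[T_l - I_l, T_l)$, which is exactly the claim.

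I would handle two minor corner cases to make the argument complete. First, the base case where step $l$ is the very first aggregation: one can set $t_{\tau_{0}} = 0$ (or $-\infty$) so the trigger condition still produces $t_{\tau_{0}} < T_l - I_l$, and vacuously no earlier aggregation falls in $[T_l - I_l, T_l)$. Second, the discrete nature of the control loop: since aggregation events can only fire at loop steps and $t_{\tau_{j-1}}$ is updated only upon firing, the ``most recent'' interpretation is unambiguous.

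The main obstacle, if any, is not a technical one but a notational one: carefully tying together the variable names used in Alg.~\ref{algo:agg} ($t_{\tau_{j-1}}$, $I$, $T_l$) with the symbols appearing in the lemma statement ($T_l$, $I_l$), and making sure the reader sees that $I_l$ in the lemma is precisely the $I$ computed inside \texttt{ManagerIfAggregate} at step $l$. Once the bookkeeping is clear, the lemma reduces to a single inequality read off from the algorithm, so I would keep the proof to a few lines rather than introduce additional machinery.
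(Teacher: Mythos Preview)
Your proposal is correct and follows essentially the same approach as the paper: both arguments hinge on the trigger condition $T_l - t_{\tau_{j-1}} > I_l$ in Alg.~\ref{algo:agg}, with the paper phrasing it as a brief proof by contradiction and you giving the equivalent direct argument (plus slightly more care about the base case and the meaning of $t_{\tau_{j-1}}$).
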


\begin{proof}
Assume that a model aggregation happens at time $t \in [T_l - I_l, T_l)$, the elapsed time since this aggregation starts is then $T_l - t \leq I_l$. By the design of Alg.~\ref{algo:agg}, there should be no aggregation in step $l$, which leads to a contradiction.
\end{proof}

We are then able to derive the bounded staleness property.

\begin{theorem}
Executing Alg.~\ref{algo:agg} for all loop steps $l=1, \cdots, L$, the maximum number of model aggregations happening during any training process of any client $i\in[n]$ is no more than $b$, provided that the profiled latencies $\{L_i\}_{i\in[n]}$ are accurate.
\end{theorem}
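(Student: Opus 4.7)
The plan is to fix an arbitrary client $i\in[n]$ together with one of its training episodes, spanning some time window $[t_s, t_f]$, and count directly how many aggregations can be squeezed into that window. Accurate profiling gives $t_f - t_s \le L_i$, so it suffices to bound the number of aggregations in any interval of length $L_i$. Enumerate the aggregation events inside $[t_s, t_f]$ as happening at loop steps $l_1 < l_2 < \cdots < l_m$, with corresponding start times $T_{l_1} < T_{l_2} < \cdots < T_{l_m}$. I want to show $m \le b$.

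The two ingredients I plan to combine are (i) the running-client observation and (ii) Lemma~\ref{lem:interval}. For (i), at every step $l_j$ with $T_{l_j}\in[t_s,t_f]$ client $i$ is by construction still running, so $i\in R$ at that step and therefore $L_{\max}\ge L_i$, which via the assignment $I = L_{\max}/b$ in Alg.~\ref{algo:agg} gives $I_{l_j} \ge L_i/b$ for every $j$. For (ii), Lemma~\ref{lem:interval} says no aggregation can occur in $[T_{l_j}-I_{l_j}, T_{l_j})$, so for each $j\ge 2$ the previous aggregation time must satisfy $T_{l_{j-1}} < T_{l_j} - I_{l_j}$, i.e.\ $T_{l_j} - T_{l_{j-1}} > I_{l_j} \ge L_i/b$.

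Telescoping these strict gaps yields $T_{l_m} - T_{l_1} > (m-1)L_i/b$, while on the other hand $T_{l_m} - T_{l_1} \le t_f - t_s \le L_i$. Combining gives $(m-1)L_i/b < L_i$, hence $m-1 < b$ and therefore $m \le b$. Since the client and the training episode were arbitrary, the bound holds globally across all clients and all steps $l=1,\ldots,L$, which is the claim of the theorem.

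The main obstacle I expect is the clean use of the accurate-latency hypothesis, and I would flag it explicitly. Without it, the slowest running client's \emph{profiled} latency at step $l_j$ need not upper-bound client $i$'s true remaining duration, so the chain $L_{\max}\ge L_i\Rightarrow I_{l_j}\ge L_i/b$ breaks and the interval Alg.~\ref{algo:agg} picks could be too short to keep $m\le b$. One minor technicality I would also dispose of is the boundary case where an aggregation coincides with $t_s$ or $t_f$: at such an instant client $i$ is about to join or has just completed, and either convention (count it or not) is compatible with the strict-gap argument above, so it does not affect the bound.
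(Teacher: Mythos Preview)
Your proposal is correct and follows essentially the same argument as the paper: both fix a client's training window of length $L_i$, enumerate the $m$ aggregations inside it, use Lemma~\ref{lem:interval} together with $L_{\max}\ge L_i$ (since client $i$ is running) to get pairwise gaps exceeding $L_i/b$, telescope to $T_{l_m}-T_{l_1}>(m-1)L_i/b$, and combine with $T_{l_m}-T_{l_1}\le L_i$ to conclude $m-1<b$. Your write-up is in fact slightly cleaner than the paper's, which ends with the minor slip ``$m<b$'' rather than the correct $m\le b$ you derive.
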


\begin{proof}
  Consider a training process of client $i$ that lasts for $L_i$. In the interim, assume that there are $m \in \mathbb{N}$ model aggregations performed by the server, each of which happens in loop step $l_k$ where $k \in [1, m]$.

  Now, applying Lemma~\ref{lem:interval}, the duration between the 1st and the $m$-th aggregation has a lower bound $T_{l_m} - T_{l_1} > \sum_{j=2}^{m} I_{l_j}$. By definition, $I_{l_j} = L_{max, l_j} / b$ where $L_{max, l_j}$ is the end-to-end latency of the slowest client running at step $l_j$. Given that $L_{max, l_j} \geq L_i$, we further have $T_{l_m} - T_{l_1} >  (m-1) L_i/b$.

  On the other hand, we also have $L_i \geq T_{l_m} - T_{l_1}$, as the two aggregations all happen in this training process. Combining the two inequalities yields $1 > (m-1)/b$, i.e., $m < b$.
\end{proof}

\PHM{What is the convergence guarantee?} We additionally make the following assumptions which are commonly made in analyzing FL algorithms~\cite{stich2018local, yu2019parallel, li2019convergence, reddi2020adaptive}.

\begin{assumption}
  (Unbiasedness of client stochastic gradient) $\mathbb{E}_{\zeta_i}[g_i(w ; \zeta_i))] = \nabla F_i(w)$.
  \label{as:unbiased}
\end{assumption}
  
\begin{assumption}
  (Bounded local and global variance) for all clients $i \in [1, n]$, $\mathbb{E}_{\zeta_i|i}[\norm{g_i(w; \zeta_i) - \nabla F_i(w)}^2] \leq \sigma^2_{\ell},$
  and 
  $\frac{1}{n} \sum_{i=1}^n \norm{\nabla F_i(w) - \nabla f(w)}^2 \leq \sigma^2_{g}$.
  \label{as:bounded_var}
\end{assumption}
  
\begin{assumption}
  (Bounded gradient) $\norm{\nabla F_i}^2 \leq G$, $i \in [1, n]$.
  \label{as:bounded_grad}
\end{assumption}

\begin{assumption}
  (Lipschitz gradient) for all client $i \in [1, n]$, the gradient is $L$-smooth $\norm{\nabla F_i(w) - \nabla F_i(w')}^2 \leq L \norm{w - w'}^2.$
  \label{as:lipz}
\end{assumption}

Given the five assumptions, by substituting the use of maximum delay $\tau_{max, K}$ with our enforced staleness bound $b$ and using a constant server learning rate $\eta_g = 1$ (as we execute Federated Averaging~\cite{mcmahan2017communication}) in \cite{nguyen2021federated}'s proof, we can derive the convergence guarantee for Pisces as follows.

\begin{theorem}
  Let $\eta^{(q)}_{\ell}$ be the local learning rate of client SGD in the $q$-th step, and define $\alpha(Q):=\sum_{q=0}^{Q-1} \eta^{(q)}_{\ell}$, $\beta(Q):=\sum_{q=0}^{Q-1} (\eta^{(q)}_{\ell})^2$. Choosing $\eta^{(q)}_{\ell} Q \leq \frac{1}{L}$ for all local steps $q=0,\cdots,Q-1$, the global model iterates in Pisces achieves the following ergodic convergence rate
  \begin{equation}
      \begin{aligned}
          \frac{1}{T} \sum_{t=0}^{T-1} \norm{\nabla f(w^t)}^2 
           &\leq \frac{2 \Big(f(w^0) - f^* \Big)}{\alpha(Q) T} +\frac{L}{2}\frac{\beta(Q) }{ \alpha(Q)}   \sigma^2_{\ell} \\ + 3 L^2 Q  & \beta(Q)  \Big(b^2 + 1 \Big)  \Big(\sigma^2_{\ell} + \sigma^2_g +  G\Big).
      \end{aligned}
  \end{equation}
  \label{thm:convergence}
\end{theorem}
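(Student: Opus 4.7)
The plan is to adapt the perturbed-iterate framework of Mania et al.\ and the buffered asynchronous analysis of FedBuff (Nguyen et al., 2021), simply substituting our provably enforced staleness bound $b$ (from the preceding Theorem) in place of their maximum-delay parameter $\tau_{\max,K}$, and specializing the server learning rate to $\eta_g=1$ to match Federated Averaging. Since the five listed assumptions (unbiasedness, bounded local/global variance, bounded gradient, Lipschitz gradient, and bounded staleness) exactly match the hypotheses of that prior analysis, the claim really is a corollary of theirs, and my job is to verify that each ingredient transports correctly.

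First, I would write the server update in perturbed-iterate form: at the $t$-th aggregation the global model moves by a weighted sum of local increments $\Delta_i = \sum_{q=0}^{Q-1}\eta_\ell^{(q)} g_i(w_{(q)}^{t-\tau_i};\zeta_i^{(q)})$, where $\tau_i\le b$ by Theorem~1. Apply $L$-smoothness (Assumption~5) to get the standard descent inequality $f(w^{t+1})\le f(w^t)+\langle\nabla f(w^t),w^{t+1}-w^t\rangle+\tfrac{L}{2}\lVert w^{t+1}-w^t\rVert^2$. Taking conditional expectations and invoking unbiasedness (Assumption~2) turns the inner product into $-\alpha(Q)\lVert\nabla f(w^t)\rVert^2$ plus three perturbation terms: (i) stochastic-gradient noise, handled by the local-variance side of Assumption~3 and contributing $\tfrac{L}{2}\beta(Q)\sigma_\ell^2/\alpha(Q)$ after rearrangement; (ii) client drift $\lVert w_{(q)}^{t-\tau_i}-w^{t-\tau_i}\rVert^2$ from $Q$ steps of local SGD; and (iii) staleness drift $\lVert w^{t-\tau_i}-w^t\rVert^2$.

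Second, I would bound the two drift terms. The client drift is the familiar local-SGD calculation: using $\eta_\ell^{(q)}Q\le 1/L$ with Assumptions~3, 4, and 5 yields $\mathbb{E}\lVert w_{(q)}^{t-\tau_i}-w^{t-\tau_i}\rVert^2 = O\bigl(Q\beta(Q)(\sigma_\ell^2+\sigma_g^2+G)\bigr)$. The staleness drift is the step where $b$ enters: expand $w^t-w^{t-\tau_i}$ as a telescoping sum of at most $\tau_i\le b$ prior server updates, bound each by the norm of the aggregated increment using Assumption~4, then apply Cauchy--Schwarz to produce the $b^2$ factor. Packaging the two drifts together gives the $(b^2+1)(\sigma_\ell^2+\sigma_g^2+G)$ block of the theorem. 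Telescoping the descent inequality from $t=0$ to $T-1$, dividing by $\alpha(Q)T$, and using $f(w^T)\ge f^*$ produces the stated ergodic rate.

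The main obstacle will be making precise the staleness-drift step: one has to verify that replacing FedBuff's $\tau_{\max,K}$ by our $b$ goes through cleanly even though Pisces' aggregation events are triggered adaptively by Alg.~1 rather than by a fixed buffer size, and that our non-uniform, utility-based selection rule does not break the unbiasedness used in Assumption~2 at the aggregated level (one typically handles this by defining $p_i$ to absorb the inclusion probabilities). A secondary subtlety is that Alg.~1 bounds the \emph{number} of aggregations inside any client's local training window rather than the delay directly; I would reconcile the two by observing that each such intervening aggregation advances the global index by exactly one, so ``at most $b$ aggregations during client $i$'s run'' is precisely ``$\tau_i\le b$,'' which is what the FedBuff-style bound consumes.
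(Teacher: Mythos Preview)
Your proposal is correct and takes essentially the same approach as the paper: the paper's entire proof is the one-line remark that the result follows by substituting the enforced staleness bound $b$ for $\tau_{\max,K}$ and setting $\eta_g=1$ in the FedBuff analysis of Nguyen et~al.~\cite{nguyen2021federated}, which is precisely your plan. Your sketch is in fact more detailed than the paper's own argument, and the obstacles you flag (adaptive triggering of aggregation and non-uniform selection possibly affecting unbiasedness) are genuine subtleties the paper does not explicitly address.
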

\section{Implementation}\label{sec:implementation}

We implement Pisces atop Plato~\cite{tlsystem2021plato}, an open-source framework for scalable FL research, with 2.1k lines of code. 

\PHM{Platform.} Plato abstracts away the underlying ML driver with handy APIs, with which we can seamlessly port Pisces to infrastructures including PyTorch \cite{paszke2019pytorch}, Tensorflow \cite{abadi2016tensorflow}, and MindSpore \cite{mindspore}. On the other hand, at the time of engineering, Plato did not support asynchronous FL and participant selection. Thus, we first implement the coordinator's control loop and client manager for enhanced functionality. For fairness of model testing overhead across different synchronization modes, we take the testing logic from the main loop to a concurrent background process. Further, to emulate arbitrary distributions of clients' speeds given finite hardware specifications, we instrument Plato to simulate client latencies by controlling exactly when a received local update is "visible" to the FL protocol. This enables us to control clients' latencies in a fine-grained manner, as shown in Sec.~\ref{sec:evaluation_method}.
\section{Evaluation}\label{sec:evaluation}

We evaluate Pisces's effectiveness in various FL training tasks. The highlights of our evaluation are listed as follows.

\begin{enumerate}
    \item Pisces \textbf{outperforms Oort and Fedbuff} by 1.2$-$2.0$\times$ in time-to-accuracy performance.. It gains high training efficiency by automating participant selection and model aggregation in a principled way (\cref{sec:evaluation_end2end}).
    \item Pisces is superior to its counterparts over different \textbf{participation scales} and \textbf{degrees of system heterogeneity} across clients. Its efficiency is also shown to be \textbf{insensitive} to choice of the hyperparameters (\cref{sec:evaluation_sensitivity}).
\end{enumerate}

\subsection{Methodology}\label{sec:evaluation_method}

\PHM{Datasets and models.} We run two categories of applications with four real-world datasets of diverse scales.

\begin{itemize}
    \item \textit{Image Classification:} the CIFAR-10 dataset \cite{krizhevsky2009learning} with 60k colored images in 10 classes, the MNIST dataset \cite{deng2012mnist} with 60k greyscale images in 10 classes, and a larger dataset, FEMNIST \cite{caldas2018leaf}, with 805k greyscale images in 62 categories collected from 3.5k data owners. We train LeNet5 \cite{lecun1989backpropagation} to classify the images in MNIST and FEMNIST and use ResNet-18 \cite{he2016deep} for CIFAR-10.
    \item \textit{Language Modeling:} a large-scale StackOverflow \cite{stackoverflow} dataset contributed by 0.3 million clients. We train Albert \cite{lan2019albert} over it for next-word prediction.
\end{itemize}

\PHM{Experimental Setup.} We use a 20-node cluster to emulate 200 clients, where each node is an AWS EC2 \texttt{c5.4xlarge} instance (16 vCPUs and 32 GB memory). We launch another \texttt{c5.4xlarge} instance for the server. Unless otherwise stated, clients' system heterogeneity and data heterogeneity are configured independently. For system heterogeneity, we introduce barriers to the execution of clients (\cref{sec:implementation}) so that their end-to-end latencies follow the Zipf distribution parameterized by $a = 1.2$ (moderately skewed).\footnote{The end-to-end latency of the $i$th slowest client is proportional to $i^{-a}$, where $a$ is the parameter of the distribution.} To introduce data heterogeneity, we resort to either of the two solutions:

\begin{itemize}
    \item \textit{Synthetic Datasets.} For datasets that are derived from the ones used in conventional ML (MNIST and CIFAR-10), we apply latent Dirichlet allocation (LDA) over data labels for each client as in the literature \cite{hsu2019measuring, reddi2020adaptive, al2020federated, acar2021federated}. We set the concentration parameters to be a vector of 1.0's that corresponds to highly non-IID label distributions across clients where each of them biases towards different subsets of all the labels.
    \item \textit{Realistic Datasets.} For datasets collected in real distributed scenarios (FEMNIST and StackOverflow), as they have been partitioned by the original data owners, we directly allocate one partition to a client to preserve the native non-IID properties.
\end{itemize}

\begin{table}[t]
  \begin{center}
      \caption{Summary of the  hyperparameters.}
  \label{tab:parameters}
  \resizebox{\columnwidth}{!}{%
  \begin{tabular}{lcccc}
      \toprule
      Parameters & MNIST & FEMNIST & CIFAR-10 & StackOverflow \\
      \midrule
      Local epochs & 5 & 5 & 1 & 2 \\
      Batch size & 32 & 32 & 128 & 20 \\
      Learning rate & 0.01 & 0.01 & 0.01 & 0.00008 \\
      Momentum & 0.9 & 0.9 & 0.9 & 0.9 \\
      Weight decay & 0 & 0 & 0.0001 & 0.0001 \\
      \bottomrule
  \end{tabular}
  }
  \end{center}
\end{table}

\PHM{Hyperparameters.} The concurrency limit is $C=20$ (\cref{sec:selection_concurrency}; shared with Oort and FedBuff), and staleness penalty factor is $\beta=0.5$ (\cref{sec:selection_utility}). \textbf{The target staleness bound $b$ (\cref{sec:aggregation_adaptive}) always equates $C$}. We use SGD except for StackOverflow where Adam \cite{kingma2014adam} is used. Other configurations are listed at Table~\ref{tab:parameters}.

\PHM{Baseline Algorithms.} We evaluate Pisces against Oort \cite{lai2021oort}, the state-of-the-art optimized solution for synchronous FL, and FedBuff \cite{nguyen2021federated}, the cutting-edge asynchronous FL  algorithm. We use the default set of hyperparameters for Oort. The aggregation goal of FedBuff is set to 20\% of the concurrency limit, according to the authors' suggestions.

\PHM{Metrics.} We primarily care about the \textit{elapsed time} taken to reach the target accuracy. To capture the training dynamics, we also record the \textit{number of involvements} of each client and the \textit{number of aggregations} performed by the server.

\begin{table*}[t]
  \begin{center}
      \caption{Summary of Pisces' improvements on the time-to-accuracy performance.}
  \label{tab:time_to_acc}
  \begin{tabular}{cccc>{\columncolor{yellow!30}}ccc}
      \toprule
      \multirow{2.5}{*}{Task} & \multirow{2.5}{*}{Dataset} & \multirow{2.5}{*}{\shortstack[c]{Target\\Accuracy}} & \multirow{2.5}{*}{Model} & \multicolumn{3}{c}{Time-to-Accuracy} \\ \cmidrule(l){5-7}
      & & & & Pisces & Oort \cite{lai2021oort} & FedBuff \cite{nguyen2021federated} \\
      \midrule
      \multirow{3}{*}{\shortstack[c]{Image\\Classification}} & MNIST \cite{deng2012mnist} & 97.8\% & LeNet-5 \cite{lecun1989backpropagation} & 6.2min & 12.8min (2.0$\times$) & 7.6min (1.2$\times$) \\

      & FEMNIST \cite{caldas2018leaf} & 60.0\% & LeNet-5 & 8.9min & 16.0min (1.8$\times$) & 12.6min (1.4$\times$) \\

      & CIFAR-10 \cite{krizhevsky2009learning} & 65.1\% & ResNet-18 \cite{he2016deep} & 24.5min & 40.3min (1.6$\times$) & 26.5min (1.1$\times$) \\

      \midrule
      Language Modeling & StackOverflow \cite{stackoverflow} & 800 perplexity & Albert \cite{lan2019albert} & 25.0min & 48.4min (1.9$\times$) & 47.2min (1.9$\times$) \\
      \bottomrule
  \end{tabular}
  \end{center}
\end{table*}

\subsection{End-to-End Performance}\label{sec:evaluation_end2end}

We start with Pisces' time-to-accuracy performance. We first highlight its improvements over Oort and FedBuff. We further break down the source of efficiency in Pisces.

\PHM{Pisces improves time-to-accuracy performance.} Table~\ref{tab:time_to_acc} summarizes the speedups of Pisces over the two baselines, where Pisces reaches the target $1.2$-$2.0\times$ faster on the three image classification tasks. A consistent speedup of $1.9\times$ can be observed in language modeling. To understand the source of such improvements, we first compare Pisces against Oort. As depicted in Fig.~\ref{fig:over_oort}, the accumulated number of model aggregations in Pisces is constantly larger than that in Oort (and so does that in FedBuff). This demonstrates the advantages of removing the synchronization barriers.

\begin{figure}[t]
  \centering
  \begin{subfigure}[b]{0.75\columnwidth}
    \centering
    \includegraphics[width=\columnwidth]{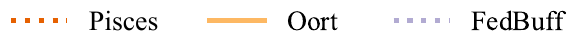}
  \end{subfigure}
  \begin{subfigure}[b]{0.46\columnwidth}
    \centering
    \includegraphics[width=\columnwidth]{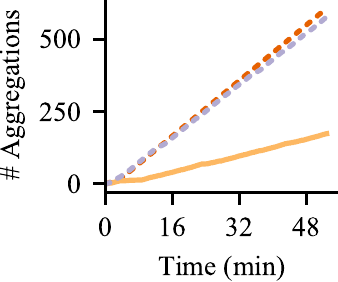}
    \caption{FEMNIST (Image)}
    \label{fig:time_to_acc_femnist}
  \end{subfigure}
  \hspace{3.0px}
  \begin{subfigure}[b]{0.45\columnwidth}
    \centering
    \includegraphics[width=\columnwidth]{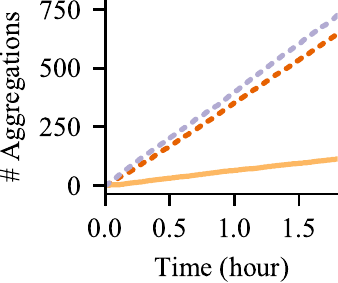}
    \caption{StackOverflow (LM)}
    \label{fig:time_to_acc_stackoverflow}
\end{subfigure}
\caption{Pisces performs model aggregation more frequently than Oort for being asynchronous.}
\label{fig:over_oort}
\end{figure}

On the other hand, albeit with comparable update frequency as in Pisces, FedBuff improves over Oort to a milder degree. The key downside of FedBuff lies in its random selection method. As shown in Fig.~\ref{fig:over_fedbuff}, Pisces prefers clients with large datasets, while FedBuff shows barely any difference in the interests of clients. Given the intuition that clients with larger datasets have higher potential to improve the global model, Pisces makes more efficient use of concurrency quotas than FedBuff does. Oort also differentiates clients by data quality, though to a more moderate extent as it has to reconcile for clients' speeds.

\begin{figure}[t]
  \centering
  \begin{subfigure}[b]{0.47\columnwidth}
    \centering
    \includegraphics[width=\columnwidth]{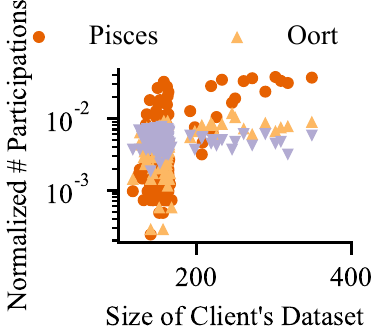}
    \caption{FEMNIST (Image)}
    \label{fig:time_to_acc_femnist}
\end{subfigure}
  \centering
  \begin{subfigure}[b]{0.45\columnwidth}
    \centering
    \includegraphics[width=\columnwidth]{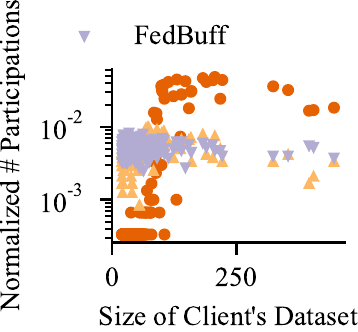}
    \caption{StackOverflow (LM)}
    \label{fig:time_to_acc_stackoverflow}
\end{subfigure}
\caption{Pisces selects informative clients more frequently than FedBuff with principled selection.}
\label{fig:over_fedbuff}
\end{figure}

\PHM{Pisces exhibits stable convergence behaviors.} Fig.~\ref{fig:time_to_acc} further plots the learning curves of different protocols. First, when reaching the corresponding time limit, Pisces achieves 0.2\%, 2.9\%, and 0.8\% higher accuracy on MNIST, FEMNIST, and CIFAR-10, respectively, and 23 lower perplexity on StackOverflow, as compared to Oort. In other words, apart from theoretical convergence guarantees on convergence, Pisces is also empirically shown to achieve comparative final model quality with that in Oort (synchronous FL).

\begin{figure}[t]
  \centering
  \begin{subfigure}[b]{0.75\columnwidth}
    \centering
    \includegraphics[width=\columnwidth]{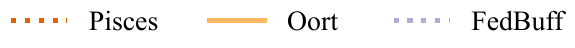}
  \end{subfigure}
  \begin{subfigure}[b]{0.45\columnwidth}
    \centering
    \includegraphics[width=\columnwidth]{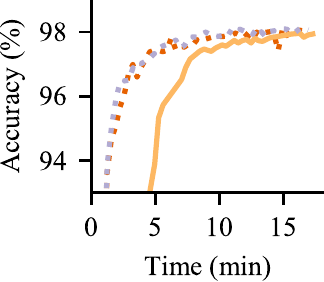}
    \caption{MNIST (Image)}
    \label{fig:time_to_acc_mnist}
  \end{subfigure}
    \hspace{3.0px}
    \begin{subfigure}[b]{0.45\columnwidth}
      \centering
      \includegraphics[width=\columnwidth]{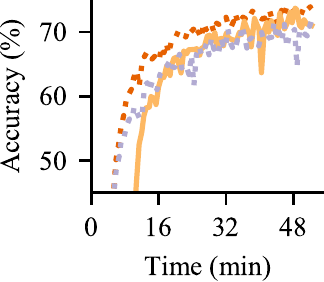}
      \caption{FEMNIST (Image)}
      \label{fig:time_to_acc_femnist}
  \end{subfigure}
  \begin{subfigure}[b]{0.45\columnwidth}
    \centering
    \includegraphics[width=\columnwidth]{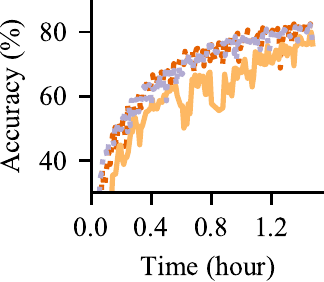}
    \caption{CIFAR-10 (Image)}
    \label{fig:time_to_acc_cifar}
  \end{subfigure}
  \hspace{3.0px}
  \begin{subfigure}[b]{0.45\columnwidth}
    \centering
    \includegraphics[width=\columnwidth]{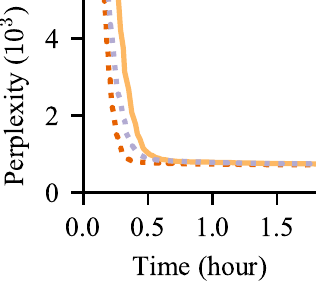}
    \caption{StackOverflow (LM)}
    \label{fig:time_to_acc_stackoverflow}
  \end{subfigure}
\caption{Elaboration on the convergence behaviors.}
\label{fig:time_to_acc}
\end{figure}

Moreover, in all evaluated cases, Pisces' model quality evolves more stably than that in Oort, especially in the middle and late stages. Pisces' stability advantage probably stems from the noise in local updates induced by moderately stale computation, which is analogous to the one introduced to avoid overfitting in traditional ML \cite{srivastava2014dropout}. This conjecture complies with the fact that FedBuff also exhibits stable convergence, though sometimes to an inferior model quality (3.1\% and 0.4\% lower accuracy than Pisces on FEMNIST and CIFAR-10, respectively, and 24 higher perplexity on StackOverflow).

\PHM{Pisces' optimizations on participant selection are effective.} To examine which part of Pisces' participant selection optimizations to accredit with, we break it down and formulate two variants: (i) \textit{Pisces w/o slt.}: we disable Pisces' selection strategies and sample randomly instead; (ii) \textit{Pisces w/o stale.}: while we still select clients based on their data quality, we ignore the impact that clients' staleness has on clients' utilities, as if the second term in Eq.~\ref{eq:Pisces_utility} is consistently set to be one. Fig.~\ref{fig:breakdown_slt} reports the comparison of complete Pisces with these two variants.

\begin{figure}[t]
  \centering
  \begin{subfigure}[b]{1.0\columnwidth}
    \centering
    \includegraphics[width=\columnwidth]{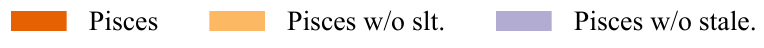}
  \end{subfigure}
  \begin{subfigure}[b]{0.39\columnwidth}
    \centering
    \includegraphics[width=\columnwidth]{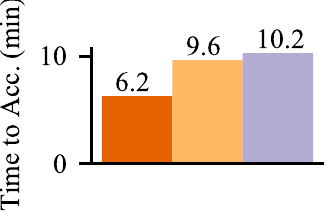}
    \caption{MNIST (Image)}
    \label{fig:breakdown_slt_mnist}
\end{subfigure}
\hspace{18px}
  \centering
  \begin{subfigure}[b]{0.40\columnwidth}
    \centering
    \includegraphics[width=\columnwidth]{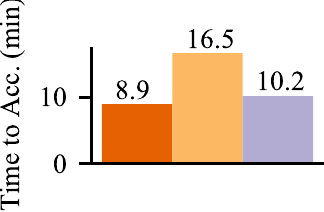}
    \caption{FEMNIST (Image)}
    \label{fig:breakdown_slt_femnist}
\end{subfigure}
\begin{subfigure}[b]{0.40\columnwidth}
  \centering
  \includegraphics[width=\columnwidth]{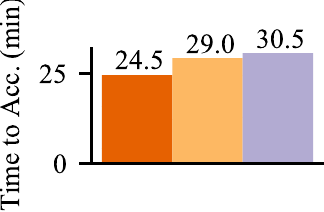}
  \caption{CIFAR-10 (Image)}
  \label{fig:breakdown_slt_cifar}
\end{subfigure}
\hspace{18px}
\begin{subfigure}[b]{0.40\columnwidth}
  \centering
  \includegraphics[width=\columnwidth]{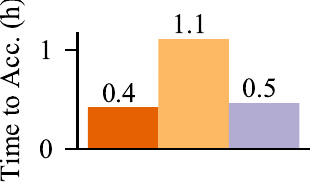}
  \caption{StackOverflow (LM)}
  \label{fig:breakdown_slt_stackoverflow}
\end{subfigure}
\caption{Breakdown of participant selection designs.}
\label{fig:breakdown_slt}
\end{figure}

As Pisces selects clients with high data quality and low potential of inducing stale computation, it improves the time-to-accuracy over \textit{Pisces w/o slt.} by $1.1$-$2.7\times$. Further, the criteria on data quality are more critical than those on staleness dynamics, as Pisces is shown to enhance the performance of \textit{Pisces w/o stale.} less significantly ($1.1$-$1.6\times$). Understandably, with adaptive pace control in model aggregation, the pressure of avoiding stale computation in participant selection is partially released. Still, the combined considerations of the two factors yield the best efficiency.

\PHM{Pisces' model optimizations on aggregation adapt to various settings.} To understand the benefits of adaptive pace control, we also compare Pisces against another variant: (i) \textit{Pisces w/o adp.}: we disable Pisces' adaptive pace control and instead resort to buffered aggregation (\cref{sec:aggregation_staleness}) with various choices of the aggregation goal $K$: 5\%, 10\% and 40\% of the concurrency limit $C$, representing gradually decreasing aggregation frequencies. As mentioned in Sec.~\ref{sec:aggregation_adaptive}, clients' staleness dynamics partly depend on the distributions of clients' speeds. We thus also vary the skewness of client latency distribution by using different $a$'s in the Zipf distribution: 1.2 (moderate), 1.6 (high), and 2.0 (heavy).

\begin{figure}[t]
  \centering
  \begin{subfigure}[b]{1.0\columnwidth}
    \centering
    \includegraphics[width=\columnwidth]{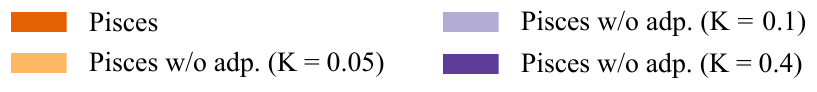}
  \end{subfigure}
  \begin{subfigure}[b]{1.0\columnwidth}
    \centering
    \includegraphics[width=\columnwidth]{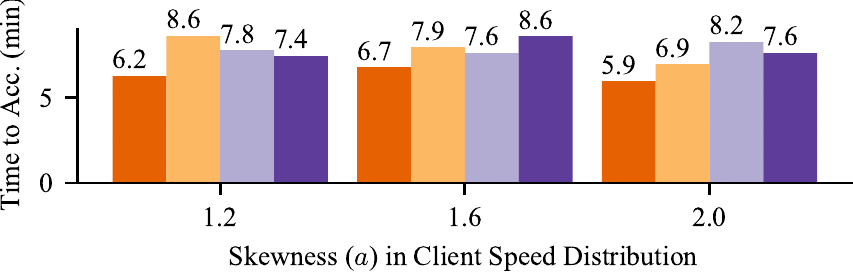}
    \caption{MNIST (Image)}
    \label{fig:breakdown_agg_mnist}
  \end{subfigure}
  \begin{subfigure}[b]{1.0\columnwidth}
    \centering
    \includegraphics[width=\columnwidth]{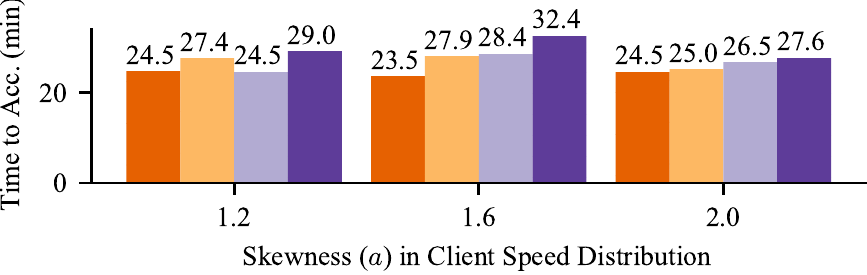}
    \caption{CIFAR-10 (Image)}
    \label{fig:breakdown_agg_cifar10}
  \end{subfigure}
\caption{Aggregation adapts to speed heterogeneity.}

\label{fig:breakdown_agg}
\end{figure}

As shown in Fig.~\ref{fig:breakdown_agg}, Pisces consistently improves over \textit{Pisces w/o adp.} with different aggregation goals by up to $1.4\times$ for both MNIST and CIFAR-10, respectively. Further, across various degrees of client speed heterogeneity, \textit{Pisces w/o adp.} exhibits unstable efficiency as sticking to any of $K$ does not yield the best performance for all cases. Thus, buffered aggregation relies on the manual tuning of $K$ for unleashing the maximum potential, while Pisces's adaptive pace control is more preferable with (1) a deterministic bound of clients' staleness for theoretically guaranteed efficiency and (2) one parameter $b$ which does not require to tune.

\subsection{Sensitivity Analysis}\label{sec:evaluation_sensitivity}

We also examine Pisces' effectiveness across various environments and configurations. All the results are based on the same accuracy targets mentioned in Sec.~\ref{sec:evaluation_end2end}.

\begin{figure}[t]
  \centering
  \begin{subfigure}[b]{0.8\columnwidth}
    \centering
    \includegraphics[width=\columnwidth]{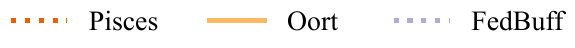}
  \end{subfigure}
  \begin{subfigure}[b]{0.47\columnwidth}
    \centering
    \includegraphics[width=\columnwidth]{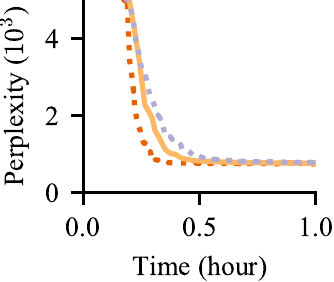}
    \caption{StackOverflow ($C=100$)}
    \label{fig:scale_stackoverflow_100}
\end{subfigure}
\hspace{10px}
\begin{subfigure}[b]{0.47\columnwidth}
  \centering
  \includegraphics[width=\columnwidth]{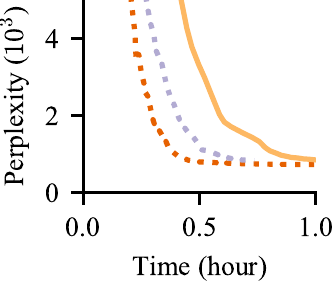}
  \caption{StackOverflow ($C=400$)}
  \label{fig:scale_stackoverflow_400}
\end{subfigure}
\caption{Pisces outperforms in various scales.}

\label{fig:scale}
\end{figure}

\PHM{Impact of participation scale.} In addition to a pool of 200 clients with the concurrency limit being 20 ($N$=200 with $C$=20 in \cref{sec:evaluation_end2end}), we evaluate Pisces on two more scales of participation: $N$=100 with $C$=10, and $N$=400 with $C$=40. As shown in Fig.~\ref{fig:scale}, as the number of clients scales up, Pisces outperforms Oort (resp. FedBuff) in time-to-perplexity by $1.7\times$ (resp. 2.1$\times$), 1.9$\times$ (resp. 1.9$\times$), and 2.4$\times$ (resp. 1.6$\times$) for $N$=100, $N$=200, and $N$=400, respectively. The key reason for Pisces' stable performance benefits is that its algorithmic designs is agnostic to the population size. Further, Pisces enhances its scalability by (1) enforcing a concurrency limit (\cref{sec:selection_concurrency}), and (2) adaptively balancing the aggregation workload (\cref{sec:aggregation_adaptive}).

\begin{figure}[t]
  \centering
  \begin{subfigure}[b]{0.47\columnwidth}
    \centering
    \includegraphics[width=\columnwidth]{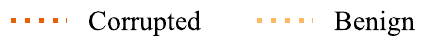}
  \end{subfigure}
  \begin{subfigure}[b]{0.51\columnwidth}
    \centering
    \includegraphics[width=\columnwidth]{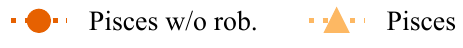}
  \end{subfigure}
  \begin{subfigure}[b]{0.39\columnwidth}
    \centering
    \includegraphics[width=\columnwidth]{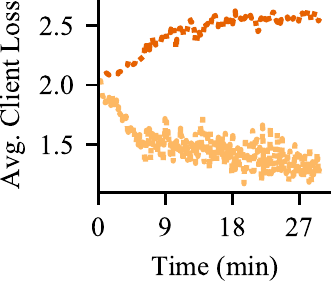}
    \caption{Loss Dynamics}
    \label{fig:corruption_loss}
\end{subfigure}
\hspace{18px}
\begin{subfigure}[b]{0.39\columnwidth}
  \centering
  \includegraphics[width=\columnwidth]{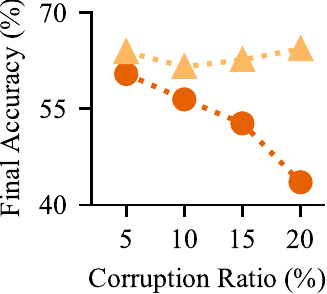}
  \caption{Robustness}
  \label{fig:corruption_acc}
\end{subfigure}
\caption{Pisces is robust against corrupted clients.}
\label{fig:corruption}
\end{figure}

\PHM{Impact of training loss outliers.} To validate the robustness to training loss outliers, we insert manual corruption into the FEMNIST dataset. We follow the literature on adversarial attacks~\cite{fang2020local, lai2021oort} to randomly flip all the labels of a subset of clients. This leads to consistently higher losses from corrupted clients than the majority, as exemplified in Fig.~\ref{fig:corruption_loss} where $5\%$ clients are corrupted. We compare Pisces against \textit{Pisces w/o rob.}, a variant of Pisces where anomalies are not identified and precluded. As shown in Fig.~\ref{fig:corruption_acc}, Pisces outperforms it in the final accuracy for accurately precluding outlier clients (\cref{sec:selection_utility}) across various scales of corruption.

\PHM{Impact of staleness penalty factor.} We next examine Pisces under different choices of the staleness penalty factor, $\beta$, which is introduced to prevent stale computation in participant selection (\cref{sec:selection_utility}). We set $\beta=0.2$ and $\beta=0.8$ in addition to the primary choice where $\beta=0.5$, where a large $\beta$ can be interpreted as a stronger motivation for Pisces to avoid stale computation. As depicted in Fig.~\ref{fig:staleness}, while different FL tasks may have different optimal choices of $\beta$ (e.g., around 0.5 for FEMNIST and 0.2 for StackOverflow), Pisces still improves performance across different uses of $\beta$.

\begin{figure}[t]
  \centering
  \begin{subfigure}[b]{0.92\columnwidth}
    \centering
    \includegraphics[width=\columnwidth]{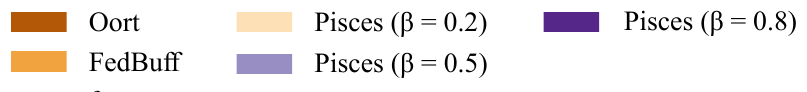}
  \end{subfigure}
  \begin{subfigure}[b]{0.39\columnwidth}
    \centering
    \includegraphics[width=\columnwidth]{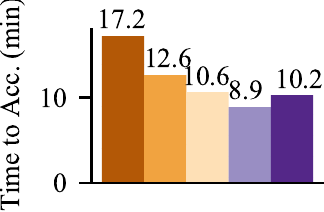}
    \caption{FEMNIST (Image)}
    \label{fig:staleness_femnist}
\end{subfigure}
\hspace{18px}
\begin{subfigure}[b]{0.39\columnwidth}
  \centering
  \includegraphics[width=\columnwidth]{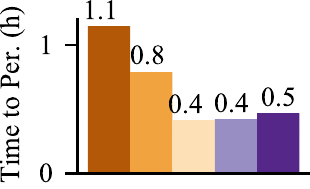}

  \caption{StackOverflow (LM)}
  \label{fig:staleness_stackoverflow}
\end{subfigure}
\caption{Selection with various staleness penalties.}
\label{fig:staleness}
\end{figure}
\section{Discussion}\label{sec:discussion}

\PHB{Generic FL training framework.} The abstraction that Pisces' architecture embodies, i.e., the three components (Fig.~\ref{fig:Pisces}) and the interactions (Fig.~\ref{fig:loop}), can be used to instantiate a wide range of FL protocols. Synchronous FL also fits in this abstraction by ensuring that (1) aggregation starts when no client is running, and (2) selection starts right after an aggregation. Given the expressiveness, we present it as a \textit{generic FL training framework} for helping FL developers compare various protocols more clearly and fairly.

\PHM{Privacy compliance.} During a client's participation in Pisces, the possible sources of information leakage are two-fold--the average training loss and local update that it reports to the server. For the current release of the plaintext training loss, we do not enlarge the attack surface compared to synchronous FL deployments~\cite{yang2018applied, hartmann2019federated,lai2021oort}, and thus can also adopt the same techniques as theirs to enhance privacy.

To prevent the local update from leaking sensitive information, Pisces can resort to differential privacy (DP)~\cite{dwork2006calibrating}, a rigorous measure of information disclosure about clients' participation in FL. Traditional DP methods~\cite{kairouz2021distributed, agarwal2021skellam} require precise control over which clients to contribute to a server update, which is not the case in asynchronous FL. However, Pisces is compatible with DP-FTRL~\cite{kairouz2021practical}, a newly emerged DP solution for privately releasing the prefix sum of a data stream. While we focus on improving FL efficiency and scalability, we plan to integrate Pisces with DP as future work. 
\section{Related Work}\label{sec:related}

\PHB{Federated learning.} Many asynchronous FL efforts focus on regularizing local optimization for accommodating data heterogeneity (FedAsync \cite{xie2020asynchronous}, ASO-Fed \cite{chen2020asynchronous}) and steering the pace of aggregation for handling staleness (HySync \cite{shi2020hysync}). However, they are designed for and evaluated on full concurrency, which leaves off suboptimal utilization and scalability (\cref{sec:selection_concurrency}). For designs considering fractional participation, they do not optimize either model aggregation (FedAR \cite{imteaj2020fedar}, \cite{chen2021towards}, \cite{hu2021device}) or participant selection (TEA-fed \cite{zhou2021tea}, FedBuff~\cite{nguyen2021federated}). Instead, Pisces comprehensively studies both design knobs for reaping the most benefits of asynchrony \cite{xu2021asynchronous}.

\PHM{Traditional machine learning.}  Despite sharing some insights with FL, e.g., modulating the learning rate \cite{zhang2016staleness} or bounding progress differences \cite{ho2013more} for staleness mitigation, asynchronous traditional ML does not expect data heterogeneity or a massive population of remote workers. Its solutions are thus confined to shared memory systems \cite{niu2011hogwild} or HPC clusters \cite{dean2012large, ho2013more, chilimbi2014project, zhang2016staleness, fan2018adaptive} and do not apply to FL.
\section{Conclusion}\label{conclusion}

While the trade-off between clients' speeds and data quality is knotty to navigate in synchronous FL, resorting to asynchronous FL requires retaining resource efficiency and avoiding stale computation. In this paper, we have presented practical principles for automating participant selection and model aggregation with Pisces. Compared to prior arts, Pisces features noticeable speedups, robustness, and flexibility.


\bibliographystyle{ACM-Reference-Format}
\bibliography{main.bib}


\end{document}